\documentclass{IEEEtran}

\usepackage{ifpdf}
\ifCLASSINFOpdf
\usepackage[pdftex]{graphicx}
\graphicspath{{./Figures/}}
\DeclareGraphicsExtensions{.pdf,.jpeg,.png}
\else
\usepackage[dvips]{graphicx}
\DeclareGraphicsExtensions{.pdf}
\usepackage[caption=false, font=footnotesize]{subfig}
\fi

\usepackage{tabularx, booktabs}
\usepackage{supertabular}
\usepackage{amssymb}
\usepackage{amsthm}
\usepackage[ruled,vlined,linesnumbered]{algorithm2e}
\usepackage{aligned-overset}
\usepackage{array}
\usepackage{balance}
\usepackage{cite}
\usepackage{xcolor}
\usepackage{epstopdf}
\usepackage{enumitem}
\usepackage{mathtools}
\usepackage{makecell}
\usepackage{multicol}
\usepackage{multirow}
\usepackage[caption=false,font=small]{subfig}
\usepackage{textcomp}
\usepackage{stfloats}
\usepackage{url}
\usepackage{verbatim}
\usepackage{siunitx}
\usepackage{algorithmic}
\usepackage{cases}

\newtheorem{proposition}{Proposition}

\newtheorem{remark}{Remark}

\begin{document}
\title{\color{black} Probabilistic Denoising-Enhanced ISAC for Stochastic Cluttered Mobile Environments}
\author{
{Nghia~Thinh~Nguyen~and~Tri~Nhu~Do}

\thanks{N. T. Nguyen and T. N. Do are with the Department of Electrical Engineering, Polytechnique Montr\'{e}al, Montreal, Quebec, Canada. Emails: nghia-thinh.nguyen@etud.polymtl.ca, tri-nhu.do@polymtl.ca}
\thanks{Codes are available at \cite{github_repo}: \texttt{https://github.com/TND-Lab/PDISAC}}
}

\maketitle

\color{black}
\begin{abstract}
In this paper, we propose Probabilistic Denoising ISAC (PDISAC), a framework built on a multi-bit slot-partitioned ISAC waveform: by partitioning each maximal-length sequence into alternating pilot and data slots, we embed multiple bits per sequence through symbol-level spreading, multiplying the data rate while every chip retains the deterministic radar code. The added throughput, however, injects data-dependent, non-white sidelobes into the range-Doppler (RD) heatmap that degrade matched-filter (MF) sensing. Rather than modifying the MF receiver, we develop RDPDNet, a lightweight probabilistic denoising network inserted between RD-map formation and constant-false-alarm-rate detection; training it with an adversarial frequency-mixup mechanism, we suppress the data-induced sidelobes and thermal noise without knowledge of the embedded symbols. We further characterize the statistics of the geometry-determined channel. The fundamental performance limits of the design are then analyzed through an analytical lower bound, a semi-analytical bit error rate (BER), and an average capacity that tie the slot allocation and sequence length to the sensing-communication trade-off. Through analytical and numerical results over a realistic urban geometry, we show that RDPDNet absorbs most of the data-embedding sensing penalty and markedly lowers the RMSE at low SNR, while the conventional data-free chain attains the bias-adjusted benchmark at high SNR. Moreover, increasing the slot allocation raises the data rate at the expense of a higher BER, exposing a tunable sensing--communication trade-off.
\end{abstract}
\begin{IEEEkeywords}
ISAC, waveform design, 
MF, CFAR, CRLB, Denoising, VAE, 
BER, RMSE, performance analysis
\end{IEEEkeywords}

\section{Introduction}

Integrated sensing and communication (ISAC) is a key enabler of future wireless networks, letting a common waveform and shared front-end support radar sensing and data communication simultaneously \cite{liu2022integrated}. Operation in stochastic, cluttered, and mobile environments is challenging, since time-varying propagation, multipath clutter, and multiple moving targets distort the sensing returns while impairing the communication link \cite{Jiang2025}. We consider a monostatic ISAC system whose co-located transmitter and receiver share a common waveform, enabling multi-target range-Doppler estimation from backscattered echoes while concurrently serving a mobile UE.
Among candidate ISAC waveforms, we adopt a phase-modulated continuous-wave (PMCW) design \cite{Park2025Compressive}.  This radar-centric waveform transmits a constant-envelope, phase-coded sequence with excellent periodic autocorrelation, low peak-to-average power ratio, and a favorable ambiguity function \cite{Giroto2022JRC}. These properties make it robust for range--Doppler (RD) sensing while remaining compatible with simple phase modulation for data embedding.

The PMCW receiver chain comprises matched filtering, slow-time Doppler processing, RD map formation, clutter suppression, detection, and parameter estimation \cite{Jin2019}. We focus on enhancing the RD heatmap, the two-dimensional range-Doppler image from which targets are detected and estimated. In stochastic, cluttered, and mobile environments it is corrupted by thermal noise, nonstationary clutter \cite{Ren2023}, time-varying multipath \cite{Tagliaferri2024}, correlated interference, and data-dependent sidelobes; as the user, targets, and scatterers move, these disturbances shift over time, producing Doppler spreading, peak migration, spurious peaks, and unpredictable noise. Methods calibrated for a fixed environment thus face distribution shifts and may miss weak targets or mistake clutter for detections \cite{Yan2023Innovative}. The heatmap is therefore a bottleneck of the sensing chain, translating directly into missed targets, false alarms, and biased estimates \cite{Cao2025}, so we single it out.

Two families of methods enhance the RD heatmap. Conventional, rule-based filters \cite{fuhrmann1992cfar} (matched filtering, windowing, adaptive clutter rejection) maximize the signal-to-noise ratio (SNR) and are computationally light, but assume fixed waveform parameters and well-characterized noise, and degrade under multi-target \cite{Xu2023Superimposed}, time-varying, and structured non-white disturbance \cite{Yan2023Innovative}. Artificial-intelligence (AI)-based denoisers instead learn to clean the heatmap directly from data \cite{Roldan2023, Wang2025GAN}, adapting to complex environments, but are typically appended as post-processing, add latency, and are evaluated only numerically without closed-form guarantees.

Within the AI methods, probabilistic machine learning (ProbML) is especially promising: via variational-autoencoder (VAE) and hierarchical-Bayesian latent models \cite{CCECE, Feintuch2023}, it treats the RD map as a distribution and infers a latent representation of the clean response, capturing structured, non-white disturbance with calibrated uncertainty. Yet existing ProbML denoisers are applied generically, decoupled from the waveform parameters that generate the artifacts, and lack closed-form links to fundamental limits. Against the closest denoisers, i.e., the generative adversarial network (GAN)-based RD cleaner \cite{Wang2025GAN} and the VAE/hierarchical-Bayesian denoisers \cite{CCECE, Feintuch2023}, our proposed approach, as detailed below, differs in two respects: it is jointly designed with the multiple data symbols embedded into ISAC waveform that generates the sidelobe artifact; it is tied to a closed-form Cram\'{e}r-Rao lower bound (CRLB) and a semi-analytical bit error rate (BER), which those works do not provide.\label{anchor_comment_2_m11}\label{anchor_comment_2_m14}\label{anchor_comment_3_6}\label{anchor_comment_E_10}\label{anchor_comment_E_2b}\label{anchor_comment_2_2b}\label{anchor_comment_4_9b}

The unaddressed gap is that these methods enhance the RD heatmap under fixed waveform settings, ignoring how the waveform parameters shape the disturbance to be removed \cite{Li2025}. In PMCW the two time scales couple the functions: the fast-time chip sequence sets the range resolution, while the slow-time domain yields the Doppler estimate and carries data by sign-modulating the code across repetitions \cite{Foroozmehr2024}. Embedding more data symbols raises the rate but perturbs the periodic autocorrelation, injecting fluctuating sidelobes, i.e., spurious or ambiguous target peaks, and exposing a sensing-communication trade-off controlled by the sequence length and the symbol-to-slot allocation. This trade-off is acute for radar-centric PMCW, whose deterministic code carries fewer symbols than communication-centric multicarrier waveforms in ISAC. Thus, fixed-parameter designs select a single operating point \cite{Li2025LowRange}. Because the induced sidelobes are data-dependent and non-white, a probabilistic stage designed jointly with the slot allocation can remove them; we exploit this by parameterizing the sounding-to-data allocation and absorbing the resulting disturbance, relaxing the trade-off rather than balancing it at a fixed operating point.

In this paper, we propose Probabilistic Denoising ISAC (PDISAC), a multi-bit, slot-partitioned ISAC waveform jointly designed with a probabilistic range–Doppler heatmap enhancement stage. Each code is split into alternating sounding and data slots, exposing the number of slow-time data symbols as an explicit design parameter, and a lightweight RD Probabilistic Denoising Network (RDPDNet), trained end-to-end with an adversarial frequency-mixup (AFM) mechanism, cleans the resulting heatmap by separating true peaks from data-induced sidelobes and noise. Our contributions are as follows.

\begin{itemize}
	\item We propose a multi-bit slot-partitioned ISAC waveform based on a maximum length sequence (MLS), enabling multiple communication bits per sounding sequence through symbol-level spreading while maintaining deterministic code properties. The proposed design provides explicit control over the throughput-reliability-sensing trade-off compared with conventional one-bit-per-code MLS-ISAC.
	
	\item We characterize a geometry-based channel model, where each LOS/NLOS path is deterministic for a given scene, while the aggregate NLOS component across random scenes admits a central-limit characterization with Rayleigh/Rician envelopes without imposing per-path fading assumptions. The model is validated through Monte-Carlo (MC) simulations.
	
	\item We develop RDPDNet, a lightweight hierarchical latent-variable denoiser that formulates RD-heatmap enhancement as Bayesian posterior inference and suppresses data-induced interference without knowledge of the embedded symbols. The proposed network is trained with an AFM loss to improve robustness against worst-case interference. This work introduces an RD denoiser co-designed with the waveform parameters responsible for the induced sidelobe artifacts.
	
	\item We derive closed-form performance limits, including the CRLB for range and velocity, a semi-analytical BER, and an average capacity, establishing the sensing-communication trade-off of the proposed waveform design. Extensive evaluations over realistic urban geometries demonstrate that RDPDNet mitigates the sensing degradation caused by data embedding and improves low-SNR RMSE while remaining consistent with the bias-adjusted CRLB at high SNR.
\end{itemize}

{\color{black}
\noindent\textbf{Notation.} We use $x$ and $X$ to denote a signal and its Fourier transform, respectively. Vectors and matrices are represented by $\vec{x}$ and $\mathbf{X}$ for specific realizations. \label{anchor_comment_4_9}\label{anchor_comment_E_10b}
}

\section{ISAC System and Signal Model} 

We consider a the ISAC system consisting of a transmitter (Tx), a receiver (Rx), a set of multiple moving targets, a set of multiple stationary scatterers, and a single user equipment (UE), as illustrated in Fig.~\ref{fig_topology}. 
The transmitter is equipped with single antenna, transmit power $P_{\rm tx}$, and antenna gain $G_{\rm tx}$. Assuming the transmitter is stationary at a position $\vec{l}_{\rm tx} \in \mathbb{R}^{3 \times 1}$.  The co-located receiver has single antenna, with power gain $G_{\rm rx}$, sharing the same position with the transmitter $\vec{l}_{\rm rx} = \vec{l}_{\rm tx}$.
The moving targets consists of $N_{\rm tars}$ targets whose three-dimensional positions, velocities, and radar cross-sections~(RCS) are collected in the matrices $\mathbf{L}_{\rm tars} \in \mathbb{R}^{3 \times N_{\rm tars}}$, $\mathbf{V}_{\rm tars} \in \mathbb{R}^{3 \times N_{\rm tars}}$, and $\vec{\sigma}_{\rm tars} \in \mathbb{R}^{1 \times N_{\rm tars}}$, respectively.
Similarly, the stationary scatterers consists of $N_{\rm scats}$ scatterers characterized by their positions $\mathbf{L}_{\rm scats} \in \mathbb{R}^{3 \times N_{\rm scats}}$, velocities $\mathbf{V}_{\rm scats} \in \mathbb{R}^{3 \times N_{\rm scats}}$,
and RCS values $\vec{\sigma}_{\rm scats} \in \mathbb{R}^{1 \times N_{\rm scats}}$.
The UE is considered a moving target, equipped with single antenna with its gain $G_{\rm ue}$. The UE position and velocity are denoted by $\vec{l}_{\rm ue} \in \mathbb{R}^{3 \times 1}$ and $\vec{v}_{\rm ue} \in \mathbb{R}^{3 \times 1}$, respectively. 

\begin{figure}[!t]
    \centering
    \color{black}
    \includegraphics[width=\linewidth]{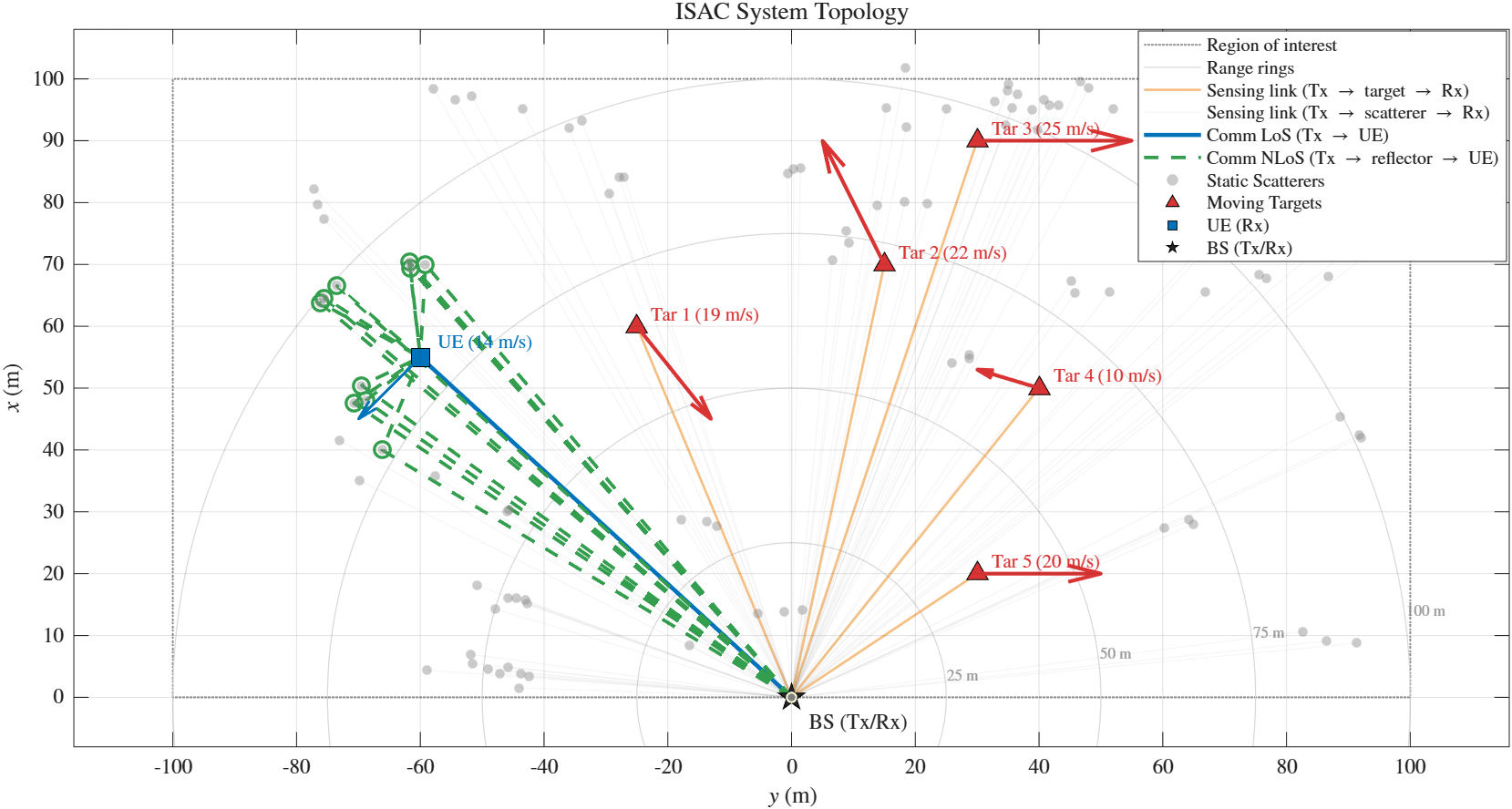}
    \caption{One realization of the considered stochastic cluttered environment, showing the co-located Tx/Rx, UE, moving targets, and stationary scatterers.}
    \label{fig_topology}
\end{figure}

According to \cite{liu2022integrated}, the data of an ISAC system is transmitted using a dedicated waveform design. In this work, we employ an extended maximal-length sequence (MLS). The MLS is used as the sounding sequence and implemented via a Pseudorandom Binary Sequence (PRBS), denoted as $\vec{p}_{\rm prbs} \in \{0, 1\}^{N_{\rm chip} \times 1}$. Here, we extended it to $N_{\rm chip} = 2^m$ to become an extended MLS, and the sequence is presented as
\begin{align} \label{eq_prbs_waveform_vector}
    \vec{p}_{\rm prbs} = [p_1, p_2, \ldots, p_{N_{\rm chip}}]^{\mathsf{T}},
\end{align}
where $p_i \in \{0, 1\}$ is a binary chip value. $T_{\rm chip}$ denotes the chip duration and $T_{\rm prbs} = N_{\rm chip} \times T_{\rm chip}$ denotes the MLS duration, i.e., the duration of the fast-time, and slow-time domain, respectively.

\vspace{-1em}
\subsection{Multi-bit Slot-Partitioned ISAC Waveform Design}
\label{anchor_comment_2_1}
\label{anchor_comment_2_m12}
\label{anchor_comment_E_1}
Denote $\vec{b} = [b_1, b_2, ..., b_{N_{\rm bit}}]^{\mathsf{T}} \in \{0, 1\}^{N_{\rm bit} \times 1}$ as the random bit data sequence, where $N_{\rm bit}$ is the number of bits. The bipolar data symbols are defined as $s_n = 2b_n - 1 \in \{-1, +1\}$. Similar to data symbols, the bipolar mapped PRBS is mapped to $p_i: \{0, 1\} \mapsto \{-1, 1\}$.
While the MLS is traditionally used in radar for its excellent periodic autocorrelation, it often carries only one bit of data in ISAC designs \cite{liu2022integrated, chen2022joint}.
From the MLS in \eqref{eq_prbs_waveform_vector}, this work explores an ISAC sequence where the MLS $\vec{p}_{\rm prbs}$ can carry $N_{\rm bit}^{\rm prbs} \ge 1$ bits from $\vec{b}$. 
For a MLS sequence, we split $N_{\rm chip}$ into $N_{\rm slot} = 2 N_{\rm bit}^{\rm prbs}$ equally-length slots. The number of chips per slot is therefore $N_{\rm chip}^{\rm slot} = N_{\rm chip}/N_{\rm slot}$. The duration of a slot is $T_{\rm slot} = N_{\rm chip}^{\rm slot} \times T_{\rm chip}$. Slots are indexed alternately as pilot slots (odd indices) and data slots (even indices). Denote $d_i \in \{-1, +1\}$ as the chip value $i \in \{1, \ldots, N_{\rm chip}\}$ of the waveform, we propose that
\begin{align} \label{eq_d_isac_i}
    d_i &= p_i \times s_n \ \text{if} \ \left\lceil i / N_{\rm chip}^{\rm slot} \right\rceil = 2n \ \text{(data slot)}, \nonumber \\
    d_i &= p_i \ \text{if} \ \left\lceil i / N_{\rm chip}^{\rm slot} \right\rceil = 2n-1 \ \text{(pilot slot)}.
\end{align}
where $n \in \{1, 2, \dots, N_{\rm bit}^{\rm prbs}\}$ is the data symbol index associated with the current ISAC sequence. This formulation corresponds to a symbol-level spreading structure, where the PRBS sequence acts as a deterministic radar code and the data symbols modulate its sign in the data slots. Let $d(t) = \sum_{i=1}^{N_{\rm chip}} d_i \operatorname{rect} ( (t - (i-1) T_{\rm chip})/T_{\rm chip} )$ be the continuous baseband ISAC signal.
For each time step $t$, the ISAC sequence is given by $\vec{d}_{\mathrm{isac}}(t) \in \{-1+0j, 1+0j\}^{N_{\rm chip} \times 1}$, can be presented as
\begin{align} \label{eq_d_isac_t}
    \vec{d}_{\mathrm{isac}}(t) = [d_1(t), d_2(t), \ldots, d_{\rm N_{\rm chip}}(t)]^{\mathsf{T}},
\end{align}
We transmit the ISAC waveform over $N_{\rm prbs}$ time steps to carrying $N_{\rm bit} = N_{\rm prbs} \times N_{\rm bit}^{\rm prbs}$ (i.e., slow time domain) data symbols to form the transmit ISAC waveform matrix
\begin{align}
    \mathbf{D}_{\rm isac} = [\vec{d}_{\mathrm{isac}}(t_1) , \vec{d}_{\mathrm{isac}}(t_2) , \ldots , \vec{d}_{\mathrm{isac}}(t_{N_{\rm prbs}})]
\end{align}
where $\mathbf{D}_{\rm isac} \in \{-1+0j, +1+0j\}^{N_{\rm chip} \times N_{\rm prbs}}$ is the one frame, time step $t_i = i T_{\rm prbs}$. The duration of the frame is $T_{\rm D} = N_{\rm prbs} \times T_{\rm prbs}$.

\begin{remark}[Multi-bit slot-partitioned ISAC waveform] \label{remark_slot_partitioning}
\label{anchor_comment_2_2}
\label{anchor_comment_E_2}
Conventional MLS-based ISAC waveforms flip the sign of the entire sequence per symbol, $\vec{d}_i = s_i \vec{p}_{\rm prbs}$, carrying one data symbol per code. In contrast, partitioning the $N_{\rm chip}$ chips into $N_{\rm slot} = 2N_{\rm bit}^{\rm prbs}$ alternating pilot/data slots of $N_{\rm chip}^{\rm slot} = N_{\rm chip}/N_{\rm slot}$ chips, as in \eqref{eq_d_isac_i}, spreads $N_{\rm bit}^{\rm prbs}$ symbols independently over the data slots, each with per-bit spreading gain $N_{\rm chip}^{\rm slot}$. Since every chip retains the deterministic code $p_i$, the periodic autocorrelation exploited for range-Doppler sensing is preserved, and the sign flips introduce only data-dependent sidelobes, later suppressed by RDPDNet.
\end{remark}

\vspace{-1em}
\subsection{Sensing and Communication Channel Models}

\subsubsection{Sensing Channel Model}
The ISAC waveform $\vec{d}_{\rm isac}(t)$ in time step $t$ is transmitted from the Tx, impinges on target~$m \in \{1, \ldots, N_{\rm tars }+N_{\rm scats}\}$, and the echo is received at the co-located Rx.
Denote the complex sensing channel value of the target $m$ in time step $t$  as $h_{\rm sen}^{(m)}(t)$, it is expanded to
\begin{align}
    h_{\rm sen}^{(m)}(t)
    = &\sqrt{\sigma_{\rm tars}^{(m)}   \mathrm{PL}_{\rm sen}^{(m)}   P_{\rm tx}   G_{\rm tx}   A_{e}^{\rm rx}} \;
    a_{\rm rx}(\phi^{(m)}_{\rm AOA}, \theta^{(m)}_{\rm ZOA}) \nonumber \\
    & \times a_{\rm tx}(\phi^{(m)}_{\rm AOD}, \theta^{(m)}_{\rm ZOD})
    e^{-j2\pi f_c \tau_{\rm sen}^{(m)}}
    e^{j2\pi f_{{\rm D},{\rm sen}}^{(m)} t},
    \label{eq_H_sen_m}
\end{align}
where $\sigma_{\rm tars}^{(m)} \in \{[\vec{\sigma}_{\rm tars}]_i, [\vec{\sigma}_{\rm scats}]_j\}$, $i = \{1,..,N_{\rm tars}\}, j = \{1,...,N_{\rm scats}\}$, is the radar cross-section (RCS) of the target $m$. 
The round-trip free-space path loss $\text{PL}_{\rm sen}^{(m)} = 1/ \left((4\pi)^2 r_m^4 \right)$, where $r_m = \|\Delta\vec{l}_m\|_2$ is the range to the target $m$ and $\Delta\vec{l}_m = \vec{l}_m - \vec{l}_{\rm tx}$ is the displacement vector from the Tx to the target.
The Rx and Tx scalar phase factor are, respectively, $a_{\rm rx}(\phi^{(m)}_{\rm AOA}, \theta^{(m)}_{\rm ZOA})$ and $a_{\rm tx}(\phi^{(m)}_{\rm AOD}, \theta^{(m)}_{\rm ZOD})$, where $(\phi, \theta)$ denote the azimuth and zenith angles of arrival/departure, respectively.  
$A_{e}^{\rm rx} = \lambda^2 G_{\rm rx} / \left(4\pi\right)$ is the effective aperture value of the Rx array, where $\lambda = c / f_c$ is the carrier wavelength.
$\tau_{\rm sen}^{(m)} = 2 r_m / c$ is the round-trip delay incurred by the target $m$,and $f_{{\rm D},{\rm sen}}^{(m)} = - 2 v_m / \lambda$ is Doppler frequency induced by the radial motion of the target $m$, where $v_m = \vec{v}_m^{\mathsf{T}}\vec{r}_m$ is the radial velocity
of the object projected onto the unit range vector
$\vec{r}_m = \Delta\vec{l}_m / r_m$.
\subsubsection{Communication Channel Model}
\label{anchor_comment_1_2}
\label{anchor_comment_2_m4}
\label{anchor_comment_4_1}
\label{anchor_comment_E_5}
The ISAC waveform $\vec{d}_{\rm isac}(t)$ at time step $t$ reaches the UE via two propagation mechanisms: a direct line-of-sight~(LOS) path from the Tx, and $N^{\rm ref}_{\rm tars} \leq N_{\rm tars} + N_{\rm scats}$ non-line-of-sight~(NLOS) paths.

According to \cite[Eq. 3.6]{Goldsmith2005-ek}, \cite[Eq. 2.27, 2.28]{Tse2005-ei}, denote $h_{\rm com}^{\rm los}(t)$ as the complex LOS channel value from the Tx to the UE at the time step $t$, we have
\begin{align}
	    \label{eq_h_los}
    h_{\rm com}^{\rm los}(t)
    = &\sqrt{\text{PL}_{\rm los} P_{\rm tx} G_{\rm tx} A_{e}^{\rm ue}} 
      a_{\rm ue}(\phi^{\rm ue}_{\rm AOA}, \theta^{\rm ue}_{\rm ZOA})
      \\ \nonumber &\times a_{\rm tx}(\phi^{\rm ue}_{\rm AOD}, \theta^{\rm ue}_{\rm ZOD})
      e^{-j2\pi f_c \tau_{\rm ue}}
      e^{+j2\pi f_{\rm D}^{\rm ue} t}
    ,
\end{align}
where $\text{PL}_{\rm los} = \big(4\pi r_{\rm ue}^2\big)^{-1}$ is the one-way free-space path loss from the Tx to the UE, with range $r_{\rm ue} = \|\Delta\vec{l}_{\rm ue}\|$ and displacement $\Delta\vec{l}_{\rm ue} = \vec{l}_{\rm ue} - \vec{l}_{\rm tx}$. The UE and Tx scalar phase factors is $a_{\rm ue}(\phi^{\rm ue}_{\rm AOA}, \theta^{\rm ue}_{\rm ZOA})$ and $a_{\rm tx}(\phi^{\rm ue}_{\rm AOD}, \theta^{\rm ue}_{\rm ZOD})$, where $(\phi, \theta)$ denote the azimuth and zenith angles of arrival/departure, respectively. $A_{e}^{\rm ue} = \lambda^2 G_{\rm ue} / \left(4\pi\right)$ is the effective aperture value of the UE array.
$\tau_{\rm ue} = r_{\rm ue}/c$ is the one-way propagation delay, and $f_{\rm D}^{\rm ue} = -v_{\rm ue}/\lambda$ is the Doppler shift induced by the UE motion, where $v_{\rm ue} = \vec{v}_{\rm ue}^{\mathsf{T}}\vec{r}_{\rm ue}$ is the radial velocity of the UE projected onto the unit direction vector $\vec{r}_{\rm ue} = \Delta\vec{l}_{\rm ue}/r_{\rm ue}$.

Denote $h_{\rm com}^{{\rm nlos},(m)}(t)$ as the complex NLOS channel value from the Tx through the reflection of the target $m \in \{1, \ldots, N^{\rm ref}_{\rm tars}\}$ to the UE at time step $t$

\begin{align}
    &h_{\rm com}^{{\rm nlos},(m)}(t)
    = \sqrt{\sigma_{\rm tars}^{(m)} \text{PL}_{\rm nlos}^{(m)} P_{\rm tx} G_{\rm tx} A_{e}^{\rm ue}}  a_{\rm ue}(\phi^{(m)}_{\rm AOA}, \theta^{(m)}_{\rm ZOA}) \nonumber \\ 
    &\quad\times a_{\rm tx}(\phi^{(m)}_{\rm AOD}, \theta^{(m)}_{\rm ZOD})
      e^{-j2\pi f_c \tau_{\rm nlos}^{(m)}}
      e^{j2\pi f_{{\rm D},{\rm nlos}}^{(m)} t}
    ,
    \label{eq_h_nlos}
\end{align}
where $\sigma_{\rm tars}^{(m)}$ is the RCS of the target $m$, $\mathrm{PL}_{\rm nlos}^{(m)} = \big( (4\pi)^2 r_{{\rm tx},m}^2 r_{m,{\rm ue}}^2\big)^{-1}$ is the two-hop path loss over the Tx $\to$ target $m$ $\to$ UE. $r_{{\rm tx},m} = \|\vec{l}_m - \vec{l}_{\rm tx}\|$ is the distance from the Tx to the target $m$ and $r_{m,{\rm ue}} = \|\vec{l}_{\rm ue} - \vec{l}_m\|$  is the one from the target $m$ to the UE. The UE and Tx scalar phase factors associated with the reflector $m$ are $a_{\rm ue}(\phi^{(m)}_{\rm AOA}, \theta^{(m)}_{\rm ZOA})$ and $a_{\rm tx}(\phi^{(m)}_{\rm AOD}, \theta^{(m)}_{\rm ZOD})$. 
The two-hop propagation delay is $\tau_{\rm nlos}^{(m)} = (r_{{\rm tx},m} + r_{m,{\rm ue}})/c$, and the associated Doppler shift accumulates the radial velocities of both the reflector and the UE, $f_{{\rm D},{\rm nlos}}^{(m)} = -(v_m + v_{\rm ue})/\lambda$. The $N_{\rm tars}^{\rm ref}$ reflectors are those elements of the scene that contribute a resolvable Tx $\to$ reflector $\to$ UE path, and are drawn from both the $N_{\rm tars}$ moving targets and the $N_{\rm scats}$ stationary scatterers, the latter being the majority in the considered urban geometry.  The $\sigma_{\rm tars}^{(m)}$ denotes the RCS of reflector $m$.

\subsection{Characterization of the Geometry-Based Channel} \label{sec_fading}
\label{anchor_comment_E_5b}
The composite communication channel observed at the UE in time step $t$ aggregates the deterministic LOS path in \eqref{eq_h_los} and the $N_{\rm tars}^{\rm ref}$ NLOS reflections in \eqref{eq_h_nlos}, denoted as $h_{\rm com}(t) = h_{\rm com}^{\rm los}(t) + h_{\rm com}^{\rm nlos}(t)$, where the NLOS component is $h_{\rm com}^{\rm nlos}(t) = \sum_{m=1}^{N_{\rm tars}^{\rm ref}} h_{\rm com}^{{\rm nlos},(m)}(t)$.
Each NLOS path coefficient in \eqref{eq_h_nlos} is determined by the scene geometry through the path loss, RCS, angles, delay, and Doppler, and is thus deterministic given a fixed scene realization and time step. Across independently generated scenes, the reflector locations and the prescribed UE/target mobility vary the geometry-dependent delays, angles, and phases. Since $f_c \tau_{\rm nlos}^{(m)} \gg 1$ at the mmWave carrier $f_c$, small geometry changes produce large carrier-phase changes, so under sufficiently diverse scene geometries the phases $\{2\pi f_c \tau_{\rm nlos}^{(m)} \bmod 2\pi\}_{m=1}^{N_{\rm tars}^{\rm ref}}$ can be approximated across the ensemble as weakly correlated and approximately uniform over $[0,2\pi)$. For a sufficiently large number of contributors, a central-limit approximation then models the aggregate diffuse component as a zero-mean circularly symmetric complex Gaussian, giving its distribution $h_{\rm com}^{\rm nlos}(t) \sim \mathcal{CN}\!\left(0,\, 2\sigma_{\rm d}^2\right)$, where $2\sigma_{\rm d}^2 = \sum_{m=1}^{N_{\rm tars}^{\rm ref}} \sigma_{\rm tars}^{(m)} \, \mathrm{PL}_{\rm nlos}^{(m)} \, P_{\rm tx} \, G_{\rm tx} \, A_{e}^{\rm ue}$,  its absolute value $|h_{\rm com}^{\rm nlos}(t)|$ follows a Rayleigh distribution with probability density function (PDF)
\begin{align} \label{eq_rayleigh_pdf}
    f_{|h_{\rm com}^{\rm nlos}|}(x) = \frac{x}{\sigma_{\rm d}^2}  e^{-x^2 / (2\sigma_{\rm d}^2)}, \quad x \ge 0.
\end{align}
Superimposing the deterministic LOS term $h_{\rm com}^{\rm los}(t)$, with $\nu = |h_{\rm com}^{\rm los}(t)|$, onto the Rayleigh diffuse component yields a Rician-distributed envelope $|h_{\rm com}(t)|$, whose PDF is
\begin{align} \label{eq_rician_pdf}
    f_{|h_{\rm com}|}(x) = \frac{x}{\sigma_{\rm d}^2}  e^{-(x^2 + \nu^2)/(2\sigma_{\rm d}^2)}   I_0 \left(\frac{x \nu}{\sigma_{\rm d}^2}\right), \quad x \ge 0,
\end{align}
where $I_0(\cdot)$ is the zeroth-order modified Bessel function of the first kind. Assuming unit-magnitude scalar array responses, as used in the power-level characterization, the corresponding Rician $K$-factor is the ratio of the deterministic LOS power to the aggregate diffuse power, $K_{\rm ric} = |h_{\rm com}^{\rm los}(t)|^2/(2\sigma_{\rm d}^2) = \mathrm{PL}_{\rm los}   P_{\rm tx}   G_{\rm tx}   A_{e}^{\rm ue} / \big(\sum_{m=1}^{N_{\rm tars}^{\rm ref}} \sigma_{\rm tars}^{(m)}   \mathrm{PL}_{\rm nlos}^{(m)}   P_{\rm tx}   G_{\rm tx}   A_{e}^{\rm ue}\big)$.
This Rician law is a level description across random scenes, not per-path small-scale fading: each path retains its deterministic coefficient in \eqref{eq_h_nlos}, so the BER analysis of Section~\ref{sec_com} averages each geometry-determined realization by MC over scenes rather than assuming a parametric fading distribution.

\subsection{Received ISAC Signal Model}
\subsubsection{Received Communication Signal}
\begin{figure*}[htp!]
\begin{equation}
	\color{black}
\begin{aligned}
    &\vec{y}_{\rm com}(t) = \sqrt{\text{PL}_{\rm los} P_{\rm tx} G_{\rm tx} A_{e}^{\rm ue}} 
      a_{\rm ue}(\phi^{\rm ue}_{\rm AOA}, \theta^{\rm ue}_{\rm ZOA})   a_{\rm tx}(\phi^{\rm ue}_{\rm AOD}, \theta^{\rm ue}_{\rm ZOD})
      e^{-j2\pi f_c \tau_{\rm ue}}
      e^{+j2\pi f_{\rm D}^{\rm ue} t} \vec{d}_{\rm isac}(t - \tau_{\rm ue})   + \vec{n}_{\rm com}(t)
       \\ &+ \textstyle\sum_{m = 1}^{N_{\rm tars}^{\rm ref}} \sqrt{\sigma_{\rm tars}^{(m)} \text{PL}_{\rm nlos}^{(m)} P_{\rm tx} G_{\rm tx} A_{e}^{\rm ue}}  a_{\rm ue}(\phi^{(m)}_{\rm AOA}, \theta^{(m)}_{\rm ZOA})
        a_{\rm tx}(\phi^{(m)}_{\rm AOD}, \theta^{(m)}_{\rm ZOD})
      e^{-j2\pi f_c \tau_{\rm nlos}^{(m)}}
      e^{+j2\pi f_{{\rm D},{\rm nlos}}^{(m)} t} \vec{d}_{\rm isac}(t - \tau_{\rm nlos}^{(m)})
\end{aligned}
\label{eq_vec_y_com}
\end{equation}
\hrule
\begin{equation}
	\color{black}
\begin{aligned}
    \vec{y}_{\rm sen}(t)  &=  \textstyle\sum_{m = 1}^{N_{\rm tars} + N_{\rm scats}}
         \sqrt{\sigma_{\rm tars}^{(m)}   \mathrm{PL}_{\rm sen}^{(m)}   P_{\rm tx}   G_{\rm tx}   A_{e}^{\rm rx}} \;
    a_{\rm rx}(\phi^{(m)}_{\rm AOA}, \theta^{(m)}_{\rm ZOA}) a_{\rm tx}(\phi^{(m)}_{\rm AOD}, \theta^{(m)}_{\rm ZOD})  \\
    &\quad\times 
    e^{-j2\pi f_c \tau_{\rm sen}^{(m)}}
    e^{j2\pi f_{{\rm D},{\rm sen}}^{(m)} t} \vec{d}_{\rm isac}( t - \tau_{\rm sen}^{(m)})
       + \vec{n}_{\rm sen}(t)
\end{aligned}
\label{eq_vec_y_sen}
\end{equation}
\hrule
\end{figure*}
At each time step $t$, the transmitted signal contains $N_{\rm chip}$ chips. Following \cite[Eq. 2.17]{Tse2005-ei} and \cite[Eq. 3.5]{Goldsmith2005-ek}, the complex signal received at the UE, $\vec{y}_{\rm com}(t) \in \mathbb{C}^{N_{\rm chip} \times 1}$, consists of the direct LOS contribution, the superposition of $N_{\rm tars}^{\rm ref}$ NLOS paths, and the additive Gaussian noise vector $\vec{n}_{\rm com}(t) \sim \mathcal{CN}(\vec{0}, \sigma^2_{\rm com}\mathbf{I}_{N_{\rm chip}})$ with covariance matrix $\sigma^2_{\rm com}\mathbf{I}_{N_{\rm chip}}$, as shown in \eqref{eq_vec_y_com}. We discretize each time step into $N_{\rm chip}$ chips to form the UE response vector $\vec{y}_{{\rm com},i} \in \mathbb{C}^{N_{\rm chip} \times 1}$, given by
\begin{align}
    \vec{y}_{{\rm com},i}
     =  \vec{d}_{{\rm isac},i}^{\rm los}h_{{\rm com},i}^{\rm los}
       +  \textstyle\sum_{\rm m = 1}^{N_{\rm tars}^{\rm ref}}\vec{d}_{{\rm isac},i}^{\mathrm{nlos}, (m)}h_{{\rm com},i}^{\mathrm{nlos}, (m)}
       +  \vec{n}_{{\rm com},i}
    ,
    \label{eq_vec_y_com_disc}
\end{align}
where $\vec{n}_{{\rm com},i} \in \mathbb{C}^{N_{\rm chip} \times 1}$, $h_{{\rm com},i}^{\rm los}$ is the complex LOS channel coefficient at step $i$, and $h_{{\rm com},i}^{\mathrm{nlos}, (m)}$ is the complex NLOS channel coefficient associated with reflector $m$ at step $i$. The vector $\vec{d}_{{\rm isac},i}^{\rm los} = \vec{d}_{\rm isac}(t_i - \tau_{\rm ue})\in \mathbb{C}^{N_{\rm chip} \times 1}$ is the delayed waveform capturing the direct-path contribution at time step $t_i$.
The corresponding delayed waveform replicas are collected in $\vec{d}_{{\rm  isac},i}^{\mathrm{nlos}, (m)} = \vec{d}_{\rm isac}(t_i - \tau_{\rm nlos}^{(m)})\in \mathbb{C}^{N_{\rm chip} \times 1}$.
Stacking the response vectors over $N_{\rm prbs}$ steps gives the communication response matrix $\mathbf{Y}_{\rm com}\in \mathbb{C}^{N_{\rm chip} \times N_{\rm prbs}}$, which is presented as
\begin{align}
    \mathbf{Y}_{\rm com} 
    = \mathbf{D}_{\rm isac}^{\rm los}\mathbf{H}_{\rm com}^{\rm los}
     +  \textstyle\sum_{m =1}^{N_{\rm tars}^{\rm ref}} \mathbf{D}_{\rm isac}^{{\rm nlos},(m)}\mathbf{H}_{\rm com}^{{\rm nlos},(m)}
     +  \mathbf{N}_{\rm com},
    \label{eq_Y_com}
\end{align}
where the waveform matrices $\mathbf{D}_{\rm isac}^{\rm los} = [\vec{d}_{{\rm isac},1}^{\rm los}, \ldots, \vec{d}_{{\rm isac},N_{\rm prbs}}^{\rm los}]$, $\mathbf{D}_{\rm isac}^{{\rm nlos},(m)} = [\vec{d}_{{\rm isac},1}^{{\rm nlos},(m)}, \ldots, \vec{d}_{{\rm isac},N_{\rm prbs}}^{{\rm nlos},(m)}]$, and the noise matrix $\mathbf{N}_{\rm com} = [\vec{n}_{{\rm com},1}, \ldots, \vec{n}_{{\rm com},N_{\rm prbs}}]$ column-stack the per-step delayed waveforms and noise over the $N_{\rm prbs}$ steps, the last with i.i.d. entries $[\mathbf{N}_{\rm com}]_{k,i} \sim \mathcal{CN}(0, \sigma^2_{\rm com})$, and the diagonal channel matrices $\mathbf{H}_{\rm com}^{\rm los} = \text{diag}\big(\vec{h}_{\rm com}^{\rm los}\big)$ and $\mathbf{H}_{\rm com}^{{\rm nlos},(m)} = \text{diag}\big(\vec{h}_{\rm com}^{{\rm nlos},(m)}\big)$ collect the coefficients $\vec{h}_{\rm com}^{\rm los} = [h_{{\rm com},1}^{\rm los}, \ldots, h_{{\rm com},N_{\rm prbs}}^{\rm los}]^{\mathsf{T}}$ and $\vec{h}_{\rm com}^{{\rm nlos},(m)} = [h_{{\rm com},1}^{{\rm nlos},(m)}, \ldots, h_{{\rm com},N_{\rm prbs}}^{{\rm nlos},(m)}]^{\mathsf{T}}$. At time step $t_{N_{\rm prbs}}$, part of the final transmitted block may arrive during $t_{N_{\rm prbs}+1}$. The transmitter therefore appends the zero block $\vec{d}_{\rm isac}(t_{N_{\rm prbs}+1}) = \vec{0}^{N_{\rm chip} \times 1}$, and the received matrix is extended to $\mathbf{Y}_{\rm com}\in \mathbb{C}^{N_{\rm chip} \times (N_{\rm prbs}+1)}$ to retain the delayed tail of $\vec{s}_{\rm bpsk}$. This one-block extension captures the complete tail provided every propagation delay remains within one block duration, i.e., $\tau_{\rm ue}<T_{\rm prbs}$ and $\max_m\tau_{\rm nlos}^{(m)}<T_{\rm prbs}$.

\subsubsection{Received Sensing Signal}
\label{anchor_comment_4_3}

Following \cite[Eq. 2.2]{Richards2022}, the ISAC signal received at the Rx, $\vec{y}_{\rm sen}(t) \in \mathbb{C}^{N_{\rm chip} \times 1}$, is the superposition of the echoes from all targets and additive noise, as shown in \eqref{eq_vec_y_sen}. The additive Gaussian noise vector is $\vec{n}_{\rm sen}(t) \sim \mathcal{CN}(\vec{0}, \sigma^2_{\rm sen}\mathbf{I}_{N_{\rm chip}})$, with covariance matrix $\sigma^2_{\rm sen}\mathbf{I}_{N_{\rm chip}}$. According to \cite[Eq. 2.26]{Tse2005-ei} and \cite[Eq. 6]{gaudio2020effectiveness}, we discretize each time step into $N_{\rm chip}$ chips to form the Rx response vector $\vec{y}_{{\rm sen},i} \in \mathbb{C}^{N_{\rm chip} \times 1}$,
\begin{align}
    \vec{y}_{{\rm sen},i}
    = \textstyle\sum_{m = 1}^{N_{\rm tars} + N_{\rm scats}}\vec{d}_{{\rm isac},i}^{(m)}   h_{{\rm sen},i}^{(m)}
      + \vec{n}_{{\rm sen},i},
    \label{eq_vec_y_sen_disc}
\end{align}
where $h_{{\rm sen},i}^{(m)}$ is the complex sensing channel of the target $m$ at step $i$,
$\vec{d}_{{\rm isac},i}^{(m)} = \vec{d}_{{\rm isac}}(t_i - \tau_{\rm sen}^{(m)}) \in \mathbb{C}^{N_{\rm chip} \times 1}$ is the sensing delay waveform vector of target $m$ at time step $t_i$,
$\vec{n}_{{\rm sen},i} \in \mathbb{C}^{N_{\rm chip} \times 1}$.
Stacking all $N_{\rm prbs}$ response vectors gives the sensing response matrix $\mathbf{Y}_{\rm sen}\in \mathbb{C}^{N_{\rm chip} \times N_{\rm prbs}}$ at the Rx as
\begin{align}
    \mathbf{Y}_{\rm sen} = \textstyle\sum_{m = 1}^{N_{\rm tars} + N_{\rm scats}} \mathbf{D}_{\rm isac}^{{\rm sen},(m)}\mathbf{H}_{\rm sen}^{(m)}  + \mathbf{N}_{\rm sen},
    \label{eq_Y_sen}
\end{align}
where $\mathbf{D}_{\rm isac}^{{\rm sen},(m)} = [\vec{d}_{{\rm isac},1}^{(m)}, \ldots, \vec{d}_{{\rm isac},N_{\rm prbs}}^{(m)}]$ and $\mathbf{N}_{\rm sen} = [\vec{n}_{{\rm sen},1}, \ldots, \vec{n}_{{\rm sen},N_{\rm prbs}}]$ column-stack the per-step delayed waveforms and noise over the $N_{\rm prbs}$ steps, the latter with i.i.d. entries $[\mathbf{N}_{\rm sen}]_{k,i} \sim \mathcal{CN}(0, \sigma^2_{\rm sen})$, and $\mathbf{H}_{\rm sen}^{(m)} = \text{diag}\big(\vec{h}_{\rm sen}^{(m)}\big)$ with $\vec{h}_{\rm sen}^{(m)} = [h_{{\rm sen},1}^{(m)}, \ldots, h_{{\rm sen},N_{\rm prbs}}^{(m)}]^{\mathsf{T}}$. Throughout, the transmit signal-to-noise ratio is the dimensionless power ratio $\mathrm{SNR}_{\rm Tx} = P_{\rm tx}/\sigma_{\rm sen}^2$ for sensing (and $P_{\rm tx}/\sigma_{\rm com}^2$ for communication), where $\sigma_{\rm sen}^2$ and $\sigma_{\rm com}^2$ are the noise variances (powers).\label{anchor_comment_2_m7}

\section{Conventional Range-Doppler Estimation}

\subsection{Matched Filtering for Range Estimation}

Because the Rx has full knowledge of the transmitted ISAC waveform $\mathbf{D}_{\rm isac}$, a matched filter (MF) correlates the received sensing matrix $\mathbf{Y}_{\rm sen}$ with it directly: the embedded communication data need not be removed, as the data-modulated waveform itself serves as the correlation reference and preserves the autocorrelation peak used for delay estimation. Denote the MF output by $\mathbf{Z}_{\rm sen} \in \mathbb{C}^{N_{\rm chip} \times N_{\rm prbs}}$. Consistent with the fast Fourier transform (FFT)-based implementation, $\circledast$ denotes length-$N_{\rm chip}$ circular correlation. For each step $i$,
\begin{align} \label{eq_Z_sen_i}
    &[\mathbf{Z}_{\rm sen}]_{:,i}
    = [\mathbf{Y}_{\rm sen}]_{:,i}\circledast[\mathbf{D}_{\rm isac}^{*}]_{:,i} \nonumber\\
    &= \textstyle\sum_{m=1}^{N_{\rm tars}+N_{\rm scats}}
    \bigl(\vec d_{{\rm isac},i}^{(m)}\circledast\vec d_{{\rm isac},i}^{*}\bigr)
    h_{{\rm sen},i}^{(m)}+\vec n_{{\rm sen},i}^{\rm mf}.
\end{align}
Here, $\vec n_{{\rm sen},i}^{\rm mf}=\vec n_{{\rm sen},i}\circledast\vec d_{{\rm isac},i}^{*}\sim\mathcal{CN}(\vec 0,\mathbf\Sigma_{{\rm sen},i}^{\rm mf})$ is the filtered noise vector, whose covariance generally satisfies $\mathbf\Sigma_{{\rm sen},i}^{\rm mf}\ne\sigma_{\rm sen}^{2}\mathbf I_{N_{\rm chip}}$. Let $\langle\cdot\rangle_{N_{\rm chip}}$ denote the modulo-$N_{\rm chip}$ index. Define the sensing autocorrelation function (ACF) vector $\vec{\mathrm{ACF}}_{\rm sen}(t_i,\tau_{\rm sen}^{(m)})=\vec d_{{\rm isac},i}^{(m)}\circledast\vec d_{{\rm isac},i}^{*}$, whose $k$-th entry, $k\in\{1,\ldots,N_{\rm chip}\}$, is presented as
\begin{align} \label{eq_acf_i_k}
[\vec{\mathrm{ACF}}_{\rm sen}(t_i,\tau_{\rm sen}^{(m)})]_k = \textstyle \sum_{n=1}^{N_{\rm chip}} \!\! d_n(t_i-\tau_{\rm sen}^{(m)})
d_{\langle n-k\rangle_{N_{\rm chip}}+1}^{*}(t_i).
\end{align}
The estimated target range is $\hat r_m=(c/2)\hat\tau_{\rm sen}^{(m)}$, where $\hat\tau_{\rm sen}^{(m)}=(k-1)T_{\rm chip}$ and $k=\arg\max_k|[\vec{\mathrm{ACF}}_{\rm sen}(t_i,\tau_{\rm sen}^{(m)})]_k|$.

\subsection{Range-Doppler Map Generation and CA-CFAR Detection}
Denote $\mathbf{Z}_{\rm rd} \in \mathbb{C}^{N_{\rm chip} \times N_{\rm prbs}}$ as the RD matrix, where $N_{\rm chip}$ is the range axis and $N_{\rm prbs}$ is the Doppler axis. From the MF output \eqref{eq_Z_sen_i}, we apply the discrete Fourier transform (DFT) for each chip index $k$ over $N_{\rm prbs}$ slow-time steps before shifting to form $[\mathbf{Z}_{\rm rd}]_{k,:} = \mathcal{F}_{\rm shift} \{[\mathbf{Z}_{\rm sen}]_{k,:}\}  \in \mathbb{C}^{1 \times N_{\rm prbs}}$, which is presented as
\begin{align} \label{eq_Z_rd_k_i}
    [\mathbf{Z}_{\rm rd}]_{k,i}  =  & \textstyle \sum_{m=1}^{N_{\rm tars}+N_{\rm scats}} \sum_{n=1}^{N_{\rm prbs}} [\vec{\text{ACF}}_{\mathrm{sen}}(t_n, \tau_{\rm sen}^{(m)})]_k   h_{{\rm sen},n}^{(m)} \nonumber \\ & \textstyle\times \exp(\frac{-j2\pi(i-N_{\rm prbs}/2-1)(n-1)}{N_{\rm prbs}}) + [\mathbf{N}_{{\rm rd}}]_{k,i},
\end{align}
where $k \in \{1,\ldots,N_{\rm chip}\}$ is the range bin index, $i - N_{\rm prbs}/2 - 1 \in \{-N_{\rm prbs}/2, \ldots, N_{\rm prbs}/2 - 1\}$ is the velocity index after shift with zero-Doppler centered at $i = N_{\rm prbs}/2 + 1$, and $[\mathbf{N}_{{\rm rd}}]_{k,i} = \sum_{n=1}^{N_{\rm prbs}} [\vec{n}^{\rm mf}_{{\rm sen},n}]_k \exp(-j2\pi(i-N_{\rm prbs}/2-1)(n-1)/N_{\rm prbs})$ is the noise term after DFT. Since stationary scatterers have zero velocity, i.e., $\vec{v}_{\rm scats} = \vec{0}$, their contribution falls entirely at the center column $i = N_{\rm prbs}/2 + 1$ of $\mathbf{Z}_{\rm rd}$. Therefore, zeroing out this column, i.e., $[\mathbf{Z}_{\rm rd}]_{:,  N_{\rm prbs}/2+1} = \vec{0}^{N_{\rm chip} \times 1}$, effectively suppresses stationary clutter and isolates the moving targets.

Accordingly, since the channel model adopts the Doppler convention $f_{\rm D} = -2v/\lambda$, the physical radial velocity associated with the shifted-DFT column $i$ is $v = -(i - N_{\rm prbs}/2 - 1)\lambda/(2 N_{\rm prbs} T_{\rm prbs})$, rather than $i \Delta v$ on the raw column index.

\begin{remark}
    The cell $[\mathbf{Z}_{\rm rd}]_{k,i}$ achieves its peak at $(k, i)$, whereas $k = \arg\max_{k} |[\vec{\text{ACF}}_{\mathrm{sen}}(t_n, \tau_{\rm sen}^{(m)})]_k|$ in \eqref{eq_acf_i_k}, and $i = \arg\max_{i} \big|\sum_{n=1}^{N_{\rm prbs}} h_{{\rm sen},n}^{(m)}   \exp(-j2\pi (i-N_{\rm prbs}/2-1)(n-1)/N_{\rm prbs})\big|$ corresponds to the range bin and velocity bin of target $m$, respectively.
\end{remark}
Because $[\mathbf{N}_{{\rm rd}}]_{k,i}$ is Gaussian, target peaks in $\mathbf{Z}_{\rm rd}$ can be detected using cell-averaging constant-false-alarm-rate (CA-CFAR) processing \cite[Sec~11.3]{Barkat2005-lz}. Define $\mathbf{P}_{\rm rd}=|\mathbf{Z}_{\rm rd}|^2\in\mathbb{R}^{N_{\rm chip}\times N_{\rm prbs}}$ as the RD power matrix. The implementation uses guard half-widths $N_{\rm g,r}$ and $N_{\rm g,d}$ and training widths $N_{\rm t,r}$ and $N_{\rm t,d}$ along the range and Doppler dimensions, respectively. For each cell under test $(k,i)$, the reference-cell average is $P_{\rm rd}^{(k,i)} = (1/N_{\rm w}) \sum_{(k', i') \in \mathcal{R}_{k,i}} [\mathbf{P}_{\mathrm{rd}}]_{k',i'}$, where $\mathcal{R}_{k,i}$ contains cells satisfying $|k'-k|\leq N_{\rm g,r}+N_{\rm t,r}$ and $|i'-i|\leq N_{\rm g,d}+N_{\rm t,d}$, excluding cells satisfying both $|k'-k|\leq N_{\rm g,r}$ and $|i'-i|\leq N_{\rm g,d}$. For an interior cell, the number of reference cells is
$
N_{\rm w}={}[2(N_{\rm g,r}+N_{\rm t,r})+1][2(N_{\rm g,d}+N_{\rm t,d})+1]
-(2N_{\rm g,r}+1)(2N_{\rm g,d}+1).
$
At a map boundary, the window is clipped and the implementation uses the actual number $N_{\rm w}^{(k,i)}$ of available reference cells. A potential target is declared when $[\mathbf{P}_{\rm rd}]_{k,i}>[\mathbf{\lambda}_{\rm rd}]_{k,i}$, where $[\mathbf{\lambda}_{\rm rd}]_{k,i}=\alpha_{\rm rd}^{(k,i)}P_{\rm rd}^{(k,i)}$ and $\alpha_{\rm rd}^{(k,i)}=N_{\rm w}^{(k,i)}(P_{\rm fa}^{-1/N_{\rm w}^{(k,i)}}-1)$ \cite[Eq.~11.18]{Barkat2005-lz}. To avoid multiple detections from a single target, a declared cell is retained only if it is a local maximum, i.e., its power is not exceeded by any cell within the surrounding $[2(N_{\rm g,r}+N_{\rm t,r})+1]\times[2(N_{\rm g,d}+N_{\rm t,d})+1]$ window, yielding one detection per target.

\begin{remark}
\label{anchor_comment_2_m6}
\label{anchor_comment_2_m10}
\label{anchor_comment_4_2}
\label{anchor_comment_4_6}
    Given the RD matrix $\mathbf{Z}_{\rm rd}$, the range resolution
    along the range axis (indexed by $k$) is $\Delta r = c T_{\rm chip}/2$,
    and the velocity resolution along the Doppler axis (indexed by $i$) is
    $\Delta v = \lambda/(2 N_{\rm prbs} T_{\rm prbs})$.
\end{remark}

\subsection{Sinc Interpolation for Sub-Bin Refinement} 
\label{subsec_sinc_inter}
The estimated range $r_m$ and velocity $v_m$ of target $m$ on the RD heatmap are limited by the range and velocity bin resolutions $\Delta r$ and $\Delta v$, respectively. Hence, the integer-bin peak location introduces quantization error. To reduce this error, we apply sinc interpolation to reconstruct the local complex RD response on a fractional-bin grid.
Let $(k^{(m)},i^{(m)})$ denote the detected integer range-Doppler peak of target $m$ in $\mathbf{Z}_{\rm rd}$, where $k^{(m)}$ and $i^{(m)}$ are the range-bin and Doppler-bin indices, respectively. Around this peak, define
$
\widetilde{Z}_{r}^{(m)}(u)
= \sum_{k=k^{(m)}-N_{\rm sinc}}^{k^{(m)}+N_{\rm sinc}}
[\mathbf{Z}_{\rm rd}]_{k,i^{(m)}}\operatorname{sinc}(u-k),
\widetilde{Z}_{v}^{(m)}(u) 
= \sum_{i=i^{(m)}-N_{\rm sinc}}^{i^{(m)}+N_{\rm sinc}}
[\mathbf{Z}_{\rm rd}]_{k^{(m)},i}\operatorname{sinc}(u-i),
$
where $N_{\rm sinc}$ is the half-width of the interpolation window in bins, and $u$ denotes a fractional bin index. The refined fractional-bin locations are obtained by maximizing the reconstructed magnitude in the local half-bin interval as
$
u_r^{(m)}
=
\arg\max_{u\in[k^{(m)}-\frac{1}{2}, k^{(m)}+\frac{1}{2}]}
|\widetilde{Z}_{r}^{(m)}(u)|,
u_v^{(m)}
=
\arg\max_{u\in[i^{(m)}-\frac{1}{2}, i^{(m)}+\frac{1}{2}]}
|\widetilde{Z}_{v}^{(m)}(u)|.
$
Then, using the one-based bin indexing convention, the refined range estimate is $\hat r_m = (u_r^{(m)}-1)\Delta r$. For the Doppler dimension, the RD map is FFT-shifted, so the centered Doppler index must be used. For even $N_{\rm prbs}$, the refined velocity estimate is $\hat v_m=-\lambda(u_v^{(m)}-\frac{N_{\rm prbs}}{2}-1)/(2N_{\rm prbs}T_{\rm prbs})$.

\section{Proposed Probabilistic Denoising-Enhanced Sensing Framework}
\label{anchor_comment_1_1}

\subsection{Denoising Problem Formulation}

The data-bearing slots make the sensing ACF fluctuate across slow time and thereby introduce spurious peaks in the RD map. Define the clean PRBS correlation $\vec{\mathrm{ACF}}_{\rm prbs}$ and the data-induced perturbation
$
\vec{\mathrm{ACF}}_{\rm ran}=\vec{\mathrm{ACF}}_{\rm sen}-\vec{\mathrm{ACF}}_{\rm prbs}.
$
Applying to \eqref{eq_acf_i_k}, the MF \eqref{eq_Z_sen_i} and RD-domain outputs \eqref{eq_Z_rd_k_i} can then be decomposed as
$
\mathbf Z_{\rm sen}=\mathbf Z_{\rm prbs}+\mathbf Z_{\rm ran}+\mathbf N_{\rm sen}^{\rm mf},
\mathbf Z_{\rm rd}=\mathbf Z_{\rm rd}^{\rm prbs}+\mathbf Z_{\rm rd}^{\rm ran}+\mathbf N_{\rm rd}^{\rm mf}.
$
Here $\mathbf Z_{\rm rd}^{\rm prbs}$ is the desired clean response, while $\mathbf Z_{\rm rd}^{\rm ran}+\mathbf N_{\rm rd}^{\rm mf}$ is the structured data-induced and thermal disturbance. We seek an estimate that remains faithful to the observation while satisfying a clean-map prior $\mathcal R(\cdot)$:
\begin{subequations} \label{eq_problem}
\begin{alignat}{2}
&&& \mathcal P_1:\ \min_{\hat{\mathbf Z}_{\rm rd}}\
\|\mathbf Z_{\rm rd}-\hat{\mathbf Z}_{\rm rd}\|_F^2
+\lambda_{\mathcal R}\mathcal R(\hat{\mathbf Z}_{\rm rd}) \label{eq_problem_obj} \\
& \text{s.~t.~}
&& \hat{\mathbf Z}_{\rm rd}\in\mathbb C^{N_{\rm chip}\times N_{\rm prbs}}, \label{eq_problem_domain} \\
&&& \mathbf M_{\rm tars}\odot\hat{\mathbf Z}_{\rm rd}
=\mathbf M_{\rm tars}\odot\mathbf Z_{\rm rd}^{\rm prbs}, \label{eq_problem_support} \\
&&& \lambda_{\mathcal R}\ge 0. \label{eq_problem_lambda}
\end{alignat}
\end{subequations}
where the target-support mask $\mathbf M_{\rm tars}\in[0,1]^{N_{\rm chip}\times N_{\rm prbs}}$, derived from the clean map $\mathbf Z_{\rm rd}^{\rm prbs}$, marks the true target locations, so the second constraint enforces fidelity on the target support while the prior $\mathcal R(\cdot)$ suppresses the data-induced and thermal disturbance elsewhere. RDPDNet learns $\mathcal R(\cdot)$ implicitly from paired corrupted and clean RD maps.

\subsection{Probabilistic Denoising Network} \label{sec_supervised_learning}
\label{anchor_comment_3_4}
\label{anchor_comment_E_8}

To solve the $\mathcal{P}_1$, we need to know the knowledge of 
$\epsilon_{\rm sen}$ with the unknown target channels and noise 
variance. In this work, we propose a denoising pipeline that directly minimizes the residual against clean RD maps $\mathbf{Z}_{\rm rd}^{\rm prbs}$, implicitly learning the noise level and the slot-structured disturbance covariance without any explicit knowledge of $\epsilon_{\rm sen}$, $\sigma_{\rm sen}^2$, or target locations. From \eqref{eq_problem}, we can reformulate the denosing problem to
\begin{subequations} \label{eq_problem_2}
\begin{alignat}{2}
&&& \mathcal{P}_2:\ \min_{\mathbf{\Phi}_{\rm rd}} \;
\mathbb E \left[\|\mathbf{Z}_{\rm rd}^{\rm prbs} - \mathbf{\hat{Z}}_{\rm rd}\|_F^2\right] \label{eq_problem_2_obj} \\
& \text{s.~t.~}
&& \mathbf{\hat{Z}}_{\rm rd} = f_{\rm den}(\mathbf{Z}_{\rm rd}; \mathbf{\Phi}_{\rm rd}) \in \mathbb{C}^{N_{\rm chip} \times N_{\rm prbs}}, \label{eq_problem_2_realize} \\
&&& q(\vec{\mathfrak{z}}_l|\vec{\mathfrak{z}}_{l-1}) = \mathcal{N}(\boldsymbol{\mu}_{\vec{\mathfrak{z}}_{l-1}}, \mathbf{\Sigma}_{\vec{\mathfrak{z}}_{l-1}}), \; l = 1, \ldots, L, \label{eq_problem_2_gauss} \\
&&& D_{\text{KL}}\big(q(\vec{\mathfrak{z}}_l|\vec{\mathfrak{z}}_{l-1})   \|   p(\vec{\mathfrak{z}}_l|\vec{\mathfrak{z}}_{l+1})\big) \le \epsilon_{\rm KL}. \label{eq_problem_2_kl}
\end{alignat}
\end{subequations}
where the hierarchical latent distributions $q(\cdot)$, $p(\cdot)$ and the number of latent levels $L$ are defined below, and $\epsilon_{\rm KL} \ge 0$ bounds the latent divergence.
The denoiser is $\hat{\mathbf Z}_{\rm rd}=f_{\rm den}(\mathbf Z_{\rm rd};\mathbf\Phi_{\rm rd})$. Define $\mathcal T(\mathbf Z)=[\Re\{\mathbf Z\},\Im\{\mathbf Z\}]$ as its two-channel real representation. Thus, the network tensors belong to $\mathbb R^{2\times N_{\rm chip}\times N_{\rm prbs}}$; the equations below retain the original RD-map symbols for readability.

In the RDPDNet architecture design, we develop extractor, encoder, and decoder modules that are iteratively used to learn the latent distributions \(\{\vec{\mathfrak{z}}_1, \ldots, \vec{\mathfrak{z}}_L\}\), where $L$ is the number of latent vectors. The encoder modules extract the latent vectors from $\mathbf{Z}_{\rm rd}^{\rm prbs}$. Defined \(f_{\rm g, enc}(\mathbf{Z}_{\rm rd}^{\rm prbs}; \mathbf{\Phi}_{\rm enc}) \sim q(\vec{\mathfrak{z}}_1, \ldots, \vec{\mathfrak{z}}_L|\mathbf{Z}_{\rm rd}^{\rm prbs})\) as the hierarchical encoder function, where $\mathbf{\Phi}_{\rm enc}$ is the updated parameter of the encoder layer. The approximate likelihood probability of the hierarchical process can be presented as 
\begin{align}
    \textstyle q(\vec{\mathfrak{z}}_1, \ldots, \vec{\mathfrak{z}}_L|\mathbf{Z}_{\rm rd}^{\rm prbs}) =  q(\vec{\mathfrak{z}}_1|\mathbf{Z}_{\rm rd}^{\rm prbs}) \prod_{l=2}^{L} &q(\vec{\mathfrak{z}}_l| \vec{\mathfrak{z}}_{l-1}) ,
\end{align}
where \(q(\vec{\mathfrak{z}}_1|\mathbf{Z}_{\rm rd}^{\rm prbs}) = \mathcal{N}(\vec{\mathfrak{z}}_1;  \boldsymbol{\mu}_{\mathbf{Z}_{\rm rd}^{\rm prbs}}, \mathbf{\Sigma}_{\mathbf{Z}_{\rm rd}^{\rm prbs}})\), \(q(\vec{\mathfrak{z}}_l|\vec{\mathfrak{z}}_{l-1}) = \mathcal{N}(\vec{\mathfrak{z}}_l;  \boldsymbol{\mu}_{\vec{\mathfrak{z}}_{l-1}}, \mathbf{\Sigma}_{\vec{\mathfrak{z}}_{l-1}})\). Given the observation RD matrix input $\mathbf{Z}_{\rm rd}$, we define the extraction function $f_{\rm g, ext}(\mathbf{Z}_{\rm rd}; \mathbf{\Phi}_{\rm ext}) \sim p(\vec{\mathfrak{z}}_L|\mathbf{Z}_{\rm rd})$, updated from parameter $\mathbf{\Phi}_{\rm ext}$, where $p(\vec{\mathfrak{z}}_L|\mathbf{Z}_{\rm rd}) = \mathcal{N}(\boldsymbol{\mu}_{\mathbf{Z}_{\rm rd}}, \mathbf{\Sigma}_{\mathbf{Z}_{\rm rd}})$. From the estimated latent vectors from the encoder and extractor, we apply the decoder function $\mathbf{\hat{Z}}_{\rm rd} = f_{\rm g, dec}(\vec{\mathfrak{z}}_1, \vec{\mathfrak{z}}_2, \ldots, \vec{\mathfrak{z}}_L; \mathbf{\Phi}_{\rm dec}) \sim p(\mathbf{\hat{Z}}_{\rm rd}, \vec{\mathfrak{z}}_1, \vec{\mathfrak{z}}_2, \dots, \vec{\mathfrak{z}}_L)$ to the estimated RD heatmap $\mathbf{\hat{Z}}_{\rm rd}$. The distribution is expanded to
\begin{align}
    p(\mathbf{\hat{Z}}_{\rm rd}, \vec{\mathfrak{z}}_1, \dots, \vec{\mathfrak{z}}_L)  =  p(\mathbf{\hat{Z}}_{\rm rd}|\vec{\mathfrak{z}}_1)   \textstyle \prod_{l=1}^{L-1} p(\vec{\mathfrak{z}}_l|\vec{\mathfrak{\mathfrak{z}}}_{l + 1}),
\end{align}
where \(p(\mathbf{\hat{Z}}_{\rm rd}|\vec{\mathfrak{z}}_1)\) is the decoder output, modeled as a latent vector \(\vec{\mathfrak{z}}_1 \sim \mathcal{N}(\vec{\mathfrak{z}}_1;  \boldsymbol{\mu}_{\vec{\mathfrak{z}}_2}, \mathbf{\Sigma}_{\vec{\mathfrak{z}}_2})\), \(p(\vec{\mathfrak{z}}_l|\vec{\mathfrak{z}}_{l+1}) = \mathcal{N}(\vec{\mathfrak{z}}_l;  \boldsymbol{\mu}_{\vec{\mathfrak{z}}_{l+1}}, \mathbf{\Sigma}_{\vec{\mathfrak{z}}_{l+1}})\) is the conditional prior for the levels \(l \in \{1, \ldots, L-1\}\). The parameter update during the training is $\mathbf{\Phi}_{\rm rd} = \{\mathbf{\Phi}_{\rm enc}, \mathbf{\Phi}_{\rm ext}, \mathbf{\Phi}_{\rm dec}\}$, while the parameter is used for inference is $\mathbf{\Phi}_{\rm rd} = \{\mathbf{\Phi}_{\rm ext}, \mathbf{\Phi}_{\rm dec}\}$, which can reduce the complexity of the RDPDNet model. Splitting the denoise function $f_{\rm den}(.)$ to train and inference stage, they are
\begin{align} 
    \mathbf{\hat{Z}}_{\rm rd} =&\ f_{\rm den}^{\rm train}(\mathbf{Z}_{\rm rd}, \mathbf{Z}_{\rm rd}^{\rm prbs}; \mathbf{\Phi}_{\rm rd}) \label{eq_train_denoising_process} \\
    =&\ f_{\rm g, dec} \Big(f_{\rm g, ext}(\mathbf{Z}_{\rm rd}; \mathbf{\Phi}_{\rm ext}), f_{\rm g, enc}(\mathbf{Z}_{\rm rd}^{\rm prbs}; \mathbf{\Phi}_{\rm enc}); \mathbf{\Phi}_{\rm dec} \Big), \nonumber \\
    \mathbf{\hat{Z}}_{\rm rd} =&\ f_{\rm den}^{\rm infer}(\mathbf{Z}_{\rm rd}; \mathbf{\Phi}_{\rm rd}) \nonumber \\
    =&\ f_{\rm g, dec} \Big(f_{\rm g, ext}(\mathbf{Z}_{\rm rd}; \mathbf{\Phi}_{\rm ext}); \mathbf{\Phi}_{\rm dec} \Big). \label{eq_infer_denoising_process}
\end{align}

\subsection{Adversarial Frequency Mixup and Loss Design}
\label{anchor_comment_1_5}
\label{anchor_comment_2_7}
\label{anchor_comment_2_m8}
\label{anchor_comment_3_5}
\label{anchor_comment_E_7}
\begin{figure*}[!t]
	\color{black}
    \begin{align} \label{eq_L_kl}
    \mathcal{L}_{\text{KL}} & =  \textstyle D_{\text{KL}}\big(p(\vec{\mathfrak{z}}_L|\mathbf{Z}_{\rm rd}) \| q(\vec{\mathfrak{z}}_L|\mathbf{Z}_{\rm rd}^{\rm prbs})\big)  +  \sum_{l=1}^{L-1} D_{\text{KL}}\big(q(\vec{\mathfrak{z}}_l|\vec{\mathfrak{z}}_{l-1}) \| p(\vec{\mathfrak{z}}_l|\vec{\mathfrak{z}}_{l+1})\big) \\
    & = \textstyle   \frac{1}{2}\Big( \log\frac{\sigma_{ \mathbf{Z}_{\rm rd}^{\rm prbs}}^2}{\sigma_{\mathbf{Z}_{\rm rd}}^2} + \frac{\sigma_{\mathbf{Z}_{\rm rd}}^2 + (\mu_{\mathbf{Z}_{\rm rd}} - \mu_{\mathbf{Z}_{\rm rd}^{\rm prbs}})^2}{\sigma_{\mathbf{Z}_{\rm rd}^{\rm prbs}}^2} - 1\Big) + \frac{1}{2}\sum_{l=1}^{L-1}\Big(  \log\frac{\sigma_{\vec{\mathfrak{z}}_{l+1}}^2}{\sigma_{\vec{\mathfrak{z}}_{l-1}}^2}  +  \frac{\sigma_{\vec{\mathfrak{z}}_{l-1}}^2  +  (\mu_{\vec{\mathfrak{z}}_{l-1}}  -  \mu_{\vec{\mathfrak{z}}_{l+1}})^2}{\sigma_{\vec{\mathfrak{z}}_{l+1}}^2}  -  1\Big) \nonumber 
\end{align}
\hrule
\end{figure*}

\begin{algorithm}[t]
	\color{black}
	\caption{RDPDNet Training Process}
	\label{alg_pdnet_train}
	\begin{algorithmic}[1]
		\STATE \textbf{Input:} $\mathbf{Z}_{\rm rd}$, $\mathbf{Z}^{\rm prbs}_{\rm rd}$.
		\STATE \textbf{Output:} $\mathbf{\Phi}_{\rm rd}$, and $\mathbf{\Phi}_{\rm mask}$.
		\STATE Initialize $\mathbf{\Phi}_{\rm rd}$, and $\mathbf{\Phi}_{\rm mask}$.
		\REPEAT
		\STATE \textit{\textbf{Stage 1: Update $\mathbf{\Phi}_{\rm rd}$ (fix $\mathbf{\Phi}_{\rm mask}$)}}
		\STATE Compute $\mathbf{\hat{Z}}_{\rm rd} = f_{\rm den}^{\rm train}(\mathbf{Z}_{\rm rd}, \mathbf{Z}_{\rm rd}^{\rm prbs}; \mathbf{\Phi}_{\rm rd})$ from \eqref{eq_train_denoising_process}.
		\STATE Compute $\mathbf{Z}_{\rm rd}^{\rm mix}$ from \eqref{eq_Z_mix}.
		\STATE Compute $\mathbf{\hat{Z}}_{\rm rd}^{\rm mix} = f_{\rm den}^{\rm train}(\mathbf{Z}_{\rm rd}^{\rm mix}, \mathbf{Z}_{\rm rd}^{\rm prbs}; \mathbf{\Phi}_{\rm rd})$ from \eqref{eq_train_denoising_process}.
		\STATE Update $\mathbf{\Phi}_{\rm rd}$ by minimizing $\mathcal{L}_{\rm den}$ in \eqref{eq_L_denoise}.

		\STATE \textit{\textbf{Stage 2: Update $\mathbf{\Phi}_{\rm mask}$ (fix $\mathbf{\Phi}_{\rm rd}$)}}
		\STATE Compute $\mathbf{\hat{Z}}_{\rm rd} = f_{\rm den}^{\rm infer}(\mathbf{Z}_{\rm rd}; \mathbf{\Phi}_{\rm rd})$ from \eqref{eq_infer_denoising_process}.
		\STATE Compute $\mathbf{M}_{\rm mask} = f_{\rm mask}(\mathbf{\hat{Z}}_{\rm rd}; \mathbf{\Phi}_{\rm mask})$.
		\STATE Compute $\mathbf{Z}_{\rm rd}^{\rm mix}$ from \eqref{eq_Z_mix}.
		\STATE Compute $\mathbf{\hat{Z}}_{\rm rd}^{\rm mix} = f_{\rm den}^{\rm infer}(\mathbf{Z}_{\rm rd}^{\rm mix}; \mathbf{\Phi}_{\rm rd})$  from \eqref{eq_infer_denoising_process}.
		\STATE Update $\mathbf{\Phi}_{\rm mask}$ by minimizing $\mathcal{L}_{\rm mask}$ in \eqref{eq_L_mask}.
		\UNTIL{convergence}
	\end{algorithmic}
\end{algorithm}

Distinguishing real target peaks from the spurious or ambiguous peaks caused by random ACF sidelobe fluctuations in the data slots, and from Gaussian thermal noise, is difficult. The AFM method \cite{Ryou2024} is effective here as a training constraint that exploits global frequency-domain (here, RD-matrix) information. Let $\mathbf{M}_{\rm mask} = f_{\rm mask}(\mathbf{\hat{Z}}_{\rm rd}; \mathbf{\Phi}_{\rm mask}) \in [0, 1]^{2 \times N_{\rm chip} \times N_{\rm prbs}}$ be the adversarial weight mask, produced from the denoised output $\mathbf{\hat{Z}}_{\rm rd} = f_{\rm den}(\mathbf{Z}_{\rm rd}; \mathbf{\Phi}_{\rm rd})$ with learnable parameters $\mathbf{\Phi}_{\rm mask}$. To keep true detections intact, a target-support mask $\mathbf{M}_{\rm tars} \in [0, 1]^{2 \times N_{\rm chip} \times N_{\rm prbs}}$ is obtained by normalizing the clean RD power $\mathbf{P}_{\rm rd}^{\rm prbs} = |\mathbf{Z}_{\rm rd}^{\rm prbs}|^2$ so that target peaks map to unity and noise regions to zero, and the adversary is confined to the non-target region via the constrained mask $\mathbf{M}_{\rm afm} = \mathbf{M}_{\rm mask} \odot (1 - \mathbf{M}_{\rm tars})$. Following \cite[Eq. 4, 7]{Ryou2024}, we synthesize an augmented RD matrix $\mathbf{Z}_{\rm rd}^{\rm mix} \in \mathbb{R}^{2 \times N_{\rm chip} \times N_{\rm prbs}}$ to serve as a challenging training input. Using the denoised output $\mathbf{\hat{Z}}_{\rm rd}$,
the objective is to ensure that the targets remain undistorted while the background noise is replaced by a worst-case distribution optimized by the adversary. The mixed RD matrix is formulated as
\begin{align} \label{eq_Z_mix}
    \mathbf{Z}_{\rm rd}^{\rm mix} = \mathbf{M}_{\rm afm} \odot \mathbf{\hat{Z}}_{\rm rd} + (1 - \mathbf{M}_{\rm afm}) \odot \mathbf{Z}_{\rm rd}.
\end{align}
Applying the RDPDNet, the estimated RD matrix is $\mathbf{\hat{Z}}_{\rm rd}^{\rm mix} = f_{\rm den}(\mathbf{Z}_{\rm rd}^{\rm mix}; \mathbf{\Phi}_{\rm rd})$. 
The training process follows a minimax optimization objective. First, we minimize the denoising loss that ensure high-quality reconstruction of true targets on RD matrix, the RDPDNet loss $\mathcal{L}_{\rm den}$ combines global Mean Squared Error (MSE) of the estimated RD matrix and mixed RD matrix with a weighted target-preservation term
\label{anchor_comment_2_m5}
\label{anchor_comment_4_7}
\begin{align} \label{eq_L_rec}
    &\mathcal{L}_{\rm rec} = \|\mathbf{\hat{Z}}_{\rm rd} - \mathbf{Z}_{\rm rd}^{\rm prbs}\|^2_F + \lambda_{\rm rec} \|\mathbf{M}_{\rm tars} \odot (\mathbf{\hat{Z}}_{\rm rd} - \mathbf{Z}_{\rm rd}^{\rm prbs})\|^2_F \\
    &\mathcal{L}^{\rm mix}_{\rm rec} = \|\mathbf{\hat{Z}}_{\rm rd}^{\rm mix} - \mathbf{Z}_{\rm rd}^{\rm prbs}\|^2_F + \lambda_{\rm rec} \|\mathbf{M}_{\rm tars} \odot (\mathbf{\hat{Z}}_{\rm rd}^{\rm mix} - \mathbf{Z}_{\rm rd}^{\rm prbs})\|^2_F
\end{align}
where $\lambda_{\rm rec} \geq 0$ is a predefined target-region weighting coefficient that controls the penalty on reconstruction error within the target support $\mathbf{M}_{\rm tars}$. Combined with the KL loss $\mathcal{L}_{\rm KL}$ in \cite[Eq.~14]{CCECE}, as given in \eqref{eq_L_kl}, the denoising loss function is proposed as
\begin{align} \label{eq_L_denoise}
    \mathcal{L}_{\rm den} = \mathcal{L}_{\rm rec}+\lambda_{\rm mix}\mathcal{L}^{\rm mix}_{\rm rec}+\lambda_{\rm KL}\mathcal{L}_{\rm KL},
\end{align}
where $\lambda_{\rm mix} > 0$ weights the adversarial-mixup reconstruction term $\mathcal{L}^{\rm mix}_{\rm rec}$ relative to the nominal reconstruction term $\mathcal{L}_{\rm rec}$, and thereby sets how strongly the denoiser is trained against the worst-case interference synthesized by the adversary, while $\lambda_{\rm KL} > 0$ weights the latent-divergence term $\mathcal{L}_{\rm KL}$ that enforces the constraint \eqref{eq_problem_2_kl}. Both are fixed hyperparameters, with the values reported in the code repository \cite{github_repo}.
Simultaneously, the mask generator $f_{\rm mask}$ is optimized to maximize the denoiser's reconstruction error while maintaining mask plausibility through a regularization term, we have
\begin{align} \label{eq_L_mask}
    \mathcal{L}_{\rm mask} = -\|\mathbf{\hat{Z}}_{\rm rd}^{\rm mix} - \mathbf{Z}_{\rm rd}^{\rm prbs}\|^2_F + \lambda_{\rm mask} \|\mathbf{M}_{\rm mask}\|^2_F,
\end{align}
where $\lambda_{\rm mask} \geq 0$ is a predefined regularization coefficient that penalizes the energy of the generated mask $\mathbf{M}_{\rm mask}$ to prevent trivial all-one solutions. Collecting the two objectives, training is the constrained minimax game $\min_{\mathbf{\Phi}_{\rm rd}}\max_{\mathbf{\Phi}_{\rm mask}}\mathbb{E}\big[\|\mathbf{\hat{Z}}_{\rm rd}^{\rm mix}-\mathbf{Z}_{\rm rd}^{\rm prbs}\|_F^2\big]$ over the denoiser weights $\mathbf{\Phi}_{\rm rd}=\{\mathbf{\Phi}_{\rm enc},\mathbf{\Phi}_{\rm ext},\mathbf{\Phi}_{\rm dec}\}$ and the adversary weights $\mathbf{\Phi}_{\rm mask}$: the adversary maximizes this mixed reconstruction error through $\mathcal{L}_{\rm mask}$ in \eqref{eq_L_mask}, while the denoiser minimizes it within the composite loss $\mathcal{L}_{\rm den}$ in \eqref{eq_L_denoise}, under the constraints of $\mathcal{P}_2$ in \eqref{eq_problem_2}. This nonconvex problem is solved by the alternating best-response updates of Algorithm~\ref{alg_pdnet_train}, which converge to a stationary point $(\mathbf{\Phi}_{\rm rd}^{\star},\mathbf{\Phi}_{\rm mask}^{\star})$. Training is supervised on paired maps $(\mathbf{Z}_{\rm rd}, \mathbf{Z}_{\rm rd}^{\rm prbs})$ generated in simulation, but the label is not an oracle: $\mathbf{Z}_{\rm rd}^{\rm prbs}$ is the MF output of the data-free MLS over the same scene, obtainable in practice by pilot-only sounding. At inference, $\mathbf{\hat{Z}}_{\rm rd} = f_{\rm den}^{\rm infer}(\mathbf{Z}_{\rm rd}; \mathbf{\Phi}_{\rm rd}) \in \mathbb{R}^{2 \times N_{\rm chip} \times N_{\rm prbs}}$, converted to the complex RD matrix $\mathbf{\hat{Z}}_{\rm rd} \in \mathbb{C}^{N_{\rm chip} \times N_{\rm prbs}}$.

\subsection{Sensing Performance Analysis}
\label{anchor_comment_1_4}
\label{anchor_comment_E_4}
\label{anchor_comment_E_1b}
\subsubsection{Root Mean Squared Error}
To evaluate the estimation accuracy of the proposed method, we adopt the root mean squared error (RMSE), defined for $\theta_m \in \{r_m, v_m\}$ as
\begin{equation}
    \text{RMSE}_\theta = \sqrt{\tfrac{1}{N_{\rm tars}} \textstyle\sum_{m=1}^{N_{\rm tars}} \big(\theta_m - \hat{\theta}_m\big)^2},
\end{equation}
where $r_m$ and $v_m$ are the true range and velocity of target $m$.
\subsubsection{Cramér–Rao Lower Bound}
\label{sub_sec_crlb}
Denote $\vec{\psi}^{(m)} = [r_m,  v_m]^{T}$ as the estimated range $r_m$ and velocity $v_m$ of moving target $m$, and let
$\vec{\Psi} = \big[(\vec{\psi}^{(1)})^{T},  (\vec{\psi}^{(2)})^{T},  \ldots,  (\vec{\psi}^{(N_{\rm tars})})^{T}\big]^{T}$ be the stacked vector of parameters over all targets. From \eqref{eq_vec_y_sen}, the sensing response signal is at time step $t$ with $N_{\rm chip}$, we descrete the time step to chip level, $y_{\rm sen}(t)$, and it can be presented as 
\begin{align}
    \textstyle y_{\rm sen}(t) = \sum_{m = 1}^{N_{\rm tars}}
       \alpha_{\rm sen}^{(m)}   r_{\rm sen}^{(m)}(t; \vec{\psi}^{(m)})+ n_{\rm sen}(t),
\end{align}
where $n_{\rm sen}(t) \sim \mathcal{CN}(0, \sigma_{\rm sen}^2)$ is noise at chip index level, $\alpha_{\rm sen}^{(m)} = \sqrt{\sigma_{\rm tars}^{(m)} P_{\rm tx} G_{\rm tx} A_{e}^{\rm rx}} a_{\rm rx}(\phi^{(m)}_{\rm AOA}\theta^{(m)}_{\rm ZOA}) a_{\rm tx}(\phi^{(m)}_{\rm AOD}\theta^{(m)}_{\rm ZOD})$ is the determistic complex value at time step $t$ of the target $m$, and $r_{\rm sen}^{(m)}(t; \vec{\psi}^{(m)}) = \sqrt{\mathrm{PL}_{\rm sen}^{(m)}(r_m)}   e^{-j2\pi f_c \tau_{\rm sen}^{(m)}(r_m)}   e^{j2\pi f_{{\rm D},{\rm sen}}^{(m)}(v_m)   t}   d\big( t - \tau_{\rm sen}^{(m)}(r_m) \big)$
is the response signal without noise that explicitly contains the parameter $\vec{\psi}^{(m)} = [r_m, v_m]^{\mathsf{T}}$ of the target $m$ through $\tau_{\rm sen}^{(m)}(r_m) = 2r_m/c$, $f_{{\rm D},{\rm sen}}^{(m)}(v_m) = -2v_m/\lambda$, and $\mathrm{PL}_{\rm sen}^{(m)}(r_m) = 1/\big((4\pi)^2 r_m^4\big)$, as defined in \eqref{eq_H_sen_m}. Discreate time step $t$ to chip duration with $t = iT_{\rm chip}$, we have $y_{\mathrm{sen},i} = \sum_{m = 1}^{N_{\rm tars}+ N_{\rm scats}} \alpha_{\rm sen}^{(m)}   r_{\mathrm{sen},i}^{(m)}(\vec{\psi}^{(m)})+ n_{\mathrm{sen},i}$. Because the co-located Tx/Rx is stationary, the $N_{\rm scats}$ stationary scatterers fall in the zero-Doppler bin that is removed in preprocessing; the likelihood below, and hence the CRLB, are therefore conditioned on the moving-target subspace with the stationary clutter treated as known and cancelled, while its residual range-sidelobe leakage into nonzero-Doppler bins is exactly the term $\mathbf{Z}_{\rm rd}^{\rm ran}$ that RDPDNet suppresses. From \cite[Eq. A.74-A.76]{Richards2022}, the log-likelihood function for a complex Gaussian noise process becomes
\begin{align} \label{eq_ln_y}
    &\ln p(y_{\rm sen}|\vec{\Psi}) \\
    & =  (-1/\sigma_{\rm sen}^{2})
    \textstyle\sum_{i=0}^{N_{\rm fr}-1}     | y_{\mathrm{sen},i}  -  \textstyle\sum_{m = 1}^{N_{\rm tars}} \alpha_{\rm sen}^{(m)}   r_{\mathrm{sen},i}^{(m)}(\vec{\psi}^{(m)})|^2, \nonumber
\end{align}
where $N_{\rm fr} = N_{\rm chip} \times N_{\rm prbs}$, up to an additive constant independent of $\vec{\Psi}$.
\label{anchor_comment_2_m2}
The following conditional CRLB assumes known complex gains, approximately orthogonal target responses in delay--Doppler, proper complex Gaussian noise, and negligible delay information from within-chip transitions. Under these assumptions, the joint FIM is approximated by independent $2\times2$ target blocks. Concretely, this delay--Doppler orthogonality holds when any two targets are resolved, i.e., their separation exceeds one resolution cell, $|r_i - r_j| \gtrsim \Delta r = c T_{\rm chip}/2$ or $|v_i - v_j| \gtrsim \Delta v = \lambda/(2 N_{\rm prbs} T_{\rm prbs})$, so that the cross-correlation between their delay--Doppler signatures is negligible and the off-diagonal blocks $\mathbf{F}^{(i,j)}\approx\vec{0}$ vanish. Following \cite[Chapter 7.2]{Richards2022}, the CRLB is obtained from the Fisher information matrix.
\begin{proposition}[Closed-form CRLB for range and velocity] \label{proposition_crlb}
Given that the CRLB for the variance of an unbiased estimate of the \(m\)-th target is \(\operatorname{Var}(\vec{\psi}^{(m)}) \ge \left[\mathbf{F} \left(\vec{\psi}^{(m)}\right)\right]^{-1}\), where The Fisher Information Matrix (FIM) is defined as $\mathbf{F}(\vec{\Psi}) = \mathbb{E}\left[\left(\partial \ln p(y_{\rm sen}|\vec{\Psi})/\partial \vec{\Psi}\right) \left(\partial \ln p(y_{\rm sen}|\vec{\Psi})/\partial \vec{\Psi}\right)^{T} \right].
$
The CRLB of the range and velocity can be presented as
\begin{align}
  \mathrm{Var}(\hat{r}_m) &\ge \frac{F_{vv}^{(m)}}{F_{rr}^{(m)}F_{vv}^{(m)} - \big(F_{rv}^{(m)}\big)^2} \label{eq_crlb_r}, \\
  \mathrm{Var}(\hat{v}_m) &\ge \frac{F_{rr}^{(m)}}{F_{rr}^{(m)}F_{vv}^{(m)} - \big(F_{rv}^{(m)}\big)^2} \label{eq_crlb_v}.
\end{align}
\end{proposition}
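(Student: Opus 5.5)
The plan is to start from the Gaussian log-likelihood \eqref{eq_ln_y} and use the standard Slepian--Bangs form for proper complex Gaussian noise with a parameter-dependent mean. Write the noiseless mean at chip $i$ as $\mu_i(\vec\Psi)=\sum_{m}\alpha_{\rm sen}^{(m)} r_{{\rm sen},i}^{(m)}(\vec\psi^{(m)})$. Because the noise is white, circular, and has variance $\sigma_{\rm sen}^2$ independent of $\vec\Psi$, differentiating \eqref{eq_ln_y} gives the score $\partial\ln p/\partial\Psi_a=(2/\sigma_{\rm sen}^2)\,\Re\{\sum_i n_{{\rm sen},i}^{*}\,\partial\mu_i/\partial\Psi_a\}$. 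Taking the expectation of the outer product and using $\mathbb E[n_i n_j^*]=\sigma_{\rm sen}^2\delta_{ij}$ and $\mathbb E[n_i n_j]=0$ then yields
\[
[\mathbf F(\vec\Psi)]_{ab}=\frac{2}{\sigma_{\rm sen}^{2}}\,\Re\Big\{\textstyle\sum_{i=0}^{N_{\rm fr}-1}\frac{\partial\mu_i^{*}}{\partial\Psi_a}\frac{\partial\mu_i}{\partial\Psi_b}\Big\}.
\]

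Next I would use the stated assumptions to block-diagonalize this matrix. The gains $\alpha_{\rm sen}^{(m)}$ are treated as known, so no nuisance rows appear. For $a\in\vec\psi^{(i)}$ and $b\in\vec\psi^{(j)}$ with $i\neq j$, the cross term is a delay--Doppler cross-correlation of derivative signatures of two resolved targets. Under the resolvability condition ($|r_i-r_j|\gtrsim\Delta r$ or $|v_i-v_j|\gtrsim\Delta v$) it is negligible, so $\mathbf F^{(i,j)}\approx\vec 0$. Hence $\mathbf F(\vec\Psi)\approx\operatorname{blkdiag}(\mathbf F(\vec\psi^{(1)}),\ldots)$, and since the inverse of a block-diagonal matrix is block-diagonal, $[\mathbf F^{-1}(\vec\Psi)]_{mm}=[\mathbf F(\vec\psi^{(m)})]^{-1}$. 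This justifies the per-target bound quoted in the statement.

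I would then write $\mathbf F(\vec\psi^{(m)})=\begin{bmatrix}F_{rr}^{(m)}&F_{rv}^{(m)}\\ F_{rv}^{(m)}&F_{vv}^{(m)}\end{bmatrix}$. This matrix is symmetric because it is the real part of a Gram matrix. Its entries follow by differentiating $r_{{\rm sen},i}^{(m)}$ at $t=iT_{\rm chip}$ with the product rule:
\begin{itemize}
\item The $r_m$-derivative collects three terms: $\partial_r\sqrt{\mathrm{PL}_{\rm sen}}=-2\sqrt{\mathrm{PL}_{\rm sen}}/r_m$, the carrier phase term $-j4\pi f_c/c$, and a waveform term through $d(t-2r_m/c)$.
\item The waveform term is dropped under the stated assumption that within-chip transitions carry negligible delay information (the rectangular chips are piecewise constant).
\item The $v_m$-derivative is $-j4\pi t/\lambda$ times the signal.
\end{itemize}
Summing $|d_i|^2=1$ over $N_{\rm fr}$ chips gives $F_{rr}^{(m)}$, $F_{vv}^{(m)}$ (involving $\sum_i t_i^2$) and $F_{rv}^{(m)}$ (involving $\sum_i t_i$, cross terms with the real path-loss derivative). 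I would keep these symbolic, since the proposition only asserts the inverse form.

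Finally, inverting the $2\times2$ block gives $[\mathbf F(\vec\psi^{(m)})]^{-1}=\frac{1}{F_{rr}^{(m)}F_{vv}^{(m)}-(F_{rv}^{(m)})^2}\begin{bmatrix}F_{vv}^{(m)}&-F_{rv}^{(m)}\\-F_{rv}^{(m)}&F_{rr}^{(m)}\end{bmatrix}$. Reading off the diagonal entries yields \eqref{eq_crlb_r} and \eqref{eq_crlb_v}. Invertibility (a positive determinant) follows from Cauchy--Schwarz, with strict inequality because the range and velocity derivative signatures are not collinear: one has a constant phase slope and the other a time-linear one. The main obstacle is the orthogonality step, since it is an approximation rather than an identity. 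I would bound the cross terms by the normalized ambiguity-function sidelobes of the MLS, which for the pilot-preserving waveform of Remark~\ref{remark_slot_partitioning} are $O(1/N_{\rm chip})$ in range and sinc-decaying in Doppler. I would then state the result as holding up to these vanishing terms.
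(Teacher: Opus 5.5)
Your proposal is correct and follows the paper's proof essentially step for step: the real-part Slepian--Bangs FIM $F_{ab}=\frac{2}{\sigma_{\rm sen}^2}\sum_i\Re\{(\partial\mu_i/\partial\theta_a)^*\,\partial\mu_i/\partial\theta_b\}$, block-diagonalization under the assumed delay--Doppler orthogonality, product-rule derivatives with $d_i'\approx 0$, and the $2\times 2$ adjugate inversion; the only differences are that the paper also evaluates $F_{rr}^{(m)}$, $F_{rv}^{(m)}$, $F_{vv}^{(m)}$ in closed form (not needed for the statement), while you add a positive-determinant check. Two caveats: keep the orthogonality step as an explicit assumption, as the paper does, rather than justifying it by $O(1/N_{\rm chip})$ MLS sidelobes, because the extended, data-modulated sequence does not retain the two-valued MLS correlation (the data-slot sign flips are precisely the source of the data-dependent sidelobes); and the path-loss derivative contributes nothing to $F_{rv}^{(m)}$, since its product with the Doppler derivative is purely imaginary.
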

\begin{IEEEproof}
The detailed proof is presented in Appendix~\ref{appx_proof_crlb}.
\end{IEEEproof}
The bound in Proposition~\ref{proposition_crlb} presumes an unbiased estimator, yet the CA-CFAR/sinc grid read-out introduces a systematic quantization bias in practice; we therefore adjust the bound before using it as a benchmark for the realized RMSE.
\begin{remark} \label{remark_biased_crlb}
\label{anchor_comment_2_3}
\label{anchor_comment_2_m3}
\label{anchor_comment_3_1}
\label{anchor_comment_E_3}
\label{anchor_comment_3_6b}
The bounds \eqref{eq_crlb_r}--\eqref{eq_crlb_v} hold for unbiased estimators under \eqref{eq_ln_y}. For a scalar $\theta_m \in \{r_m, v_m\}$ estimated at MC trial $i$, the residual error is
\begin{equation} \label{eq_residual}
    e_m^{(i)} = \hat{\theta}_m^{(i)} - \theta_m,
\end{equation}
its trial-averaged (systematic) part is the bias
\begin{equation} \label{eq_bias}
    \mathrm{Bias}(\hat{\theta}_m) = \mathbb{E}[\hat{\theta}_m] - \theta_m
    \approx \tfrac{1}{N_{\rm trial}} \textstyle\sum_{i=1}^{N_{\rm trial}} e_m^{(i)},
\end{equation}
and the MSE decomposes as
\begin{equation} \label{eq_mse_decomp}
    \mathrm{MSE}(\hat{\theta}_m) = \mathrm{Var}(\hat{\theta}_m) + \mathrm{Bias}(\hat{\theta}_m)^2.
\end{equation}
Since the CRLB bounds only $\mathrm{Var}(\hat{\theta}_m)$, and only when $\mathrm{Bias}=0$, a biased estimator may attain MSE below the unbiased CRLB without contradicting the Cram\'er--Rao theorem; comparisons must therefore account for the bias.
\end{remark}

The latent variables $\vec{\mathfrak{z}}_{1:L}$ therefore enter the CRLB comparison solely through the estimator bias, which is the only channel linking the probabilistic denoising model to the analytical sensing bound.
In contrast, RDPDNet by itself is a denoiser/preprocessor, not an estimator in the classical statistical-estimation-theory sense: it does not output the parameters $(r_m, v_m)$ directly, but rather conditions the RD map so that the downstream detection and localization stages yield more accurate $(\hat{r}_m, \hat{v}_m)$. Accordingly, the proposed estimator is the RDPDNet(+AFM)-denoised RD-map pipeline with CFAR detection and sinc interpolation as the read-out, which we adopt as the estimator to which the CRLB is compared throughout this paper. Formally, this estimator is the composition
\begin{align} \label{eq_proposed_estimator_chain}
    \mathbf{Y}_{\rm sen}
    &\xrightarrow{\;\text{MF}, \eqref{eq_Z_sen_i}\;} \mathbf{Z}_{\rm sen}
    \xrightarrow{\;\mathcal{F}_{\rm shift}, \eqref{eq_Z_rd_k_i}\;} \mathbf{Z}_{\rm rd}
    \xrightarrow{\;f_{\rm den}^{\rm infer}, \eqref{eq_infer_denoising_process}\;} \hat{\mathbf{Z}}_{\rm rd} \nonumber \\
    &\xrightarrow{\;\text{CA-CFAR}\;} (k^{(m)}, i^{(m)})
    \xrightarrow{\;\text{sinc}\;} (\hat{r}_m, \hat{v}_m),
\end{align}
where the MF and shifted DFT $\mathcal{F}_{\rm shift}$ form the noisy RD map $\mathbf{Z}_{\rm rd}$, the learned inference map $f_{\rm den}^{\rm infer}(\cdot;\mathbf{\Phi}_{\rm rd})$ denoises it into $\hat{\mathbf{Z}}_{\rm rd}$, CA-CFAR extracts the integer peak $(k^{(m)}, i^{(m)})$, and sinc interpolation refines it to the sub-bin estimate $(\hat{r}_m, \hat{v}_m)$. The only learned block in \eqref{eq_proposed_estimator_chain} is $f_{\rm den}^{\rm infer}$; the remaining blocks are fixed, deterministic signal-processing operators. Because the final read-out reads $(\hat{r}_m, \hat{v}_m)$ off a discretized RD grid, the finite resolutions $\Delta r$ and $\Delta v$ introduce a grid-quantization bias that does not vanish with increasing SNR. Assuming this bias varies slowly with respect to the true parameters, we approximate the variance lower bound by the unbiased CRLB in \eqref{eq_mse_decomp}, so that
\begin{align} \label{eq_biased_bound}
    \mathrm{RMSE} &\gtrsim \sqrt{\mathrm{CRLB}_{\rm unbiased} + \mathrm{Bias}^2}.
\end{align}
In the numerical results, we therefore compare the simulated RMSE of the proposed estimator \eqref{eq_proposed_estimator_chain} against the bias-adjusted benchmark $\sqrt{\mathrm{CRLB}_{\rm unbiased} + \mathrm{Bias}^2}$.

\subsubsection{Evaluating the Bias of the Proposed Estimator}
\label{subsub_bias_eval}
Since the proposed estimator \eqref{eq_proposed_estimator_chain} is not constrained to be unbiased, we characterize biased estimator with bias $b(\theta_m)=\mathbb{E}[\hat{\theta}_m]-\theta_m$ and derivative $b'(\theta_m)=\partial b/\partial\theta_m$, the Cram\'er--Rao inequality generalizes to the \emph{biased CRLB} \cite[Ch.~3]{Kay1993}, we have
\begin{align} \label{eq_biased_crlb}
    \mathrm{Var}(\hat{\theta}_m) &\ge \frac{\big(1+b'(\theta_m)\big)^2}{F_{\theta\theta}(\vec{\psi}^{(m)})},
    \\
    \mathrm{MSE}(\hat{\theta}_m) &\ge \frac{\big(1+b'(\theta_m)\big)^2}{F_{\theta\theta}(\vec{\psi}^{(m)})} + b(\theta_m)^2,
\end{align}
where $F_{\theta\theta}$ is the Fisher information for $\theta_m\in\{r_m,v_m\}$. The bias-adjusted benchmark \eqref{eq_biased_bound} is the special case of \eqref{eq_biased_crlb} under the slowly-varying-bias approximation $b'(\theta_m)\approx 0$. Because the denoiser $f_{\rm den}^{\rm infer}$ injects a data-driven prior learned from paired RD maps, its estimate is biased by construction. The grid-quantization component of $b(\theta_m)$ is a deterministic function of the fractional offset $\delta$ of $\theta_m$ from its nearest bin center, and persists as $\mathrm{SNR}\to\infty$, unlike the variance term.

\subsection{Numerical Results for Range-Doppler Denoising} \label{sec_result}
\label{anchor_comment_2_5}
\label{anchor_comment_2_m9}
\label{anchor_comment_3_2}
\label{anchor_comment_E_5c}

All detailed settings and supplementary analysis are available in our code repository \cite{github_repo}.
The range and Doppler resolutions are fixed at $\Delta r = cT_{\rm chip}/2$ and $\Delta v = \lambda/(2N_{\rm prbs}T_{\rm prbs})$, so RDPDNet enhances detection reliability, not resolution. Fig.~\ref{fig_rd_maps} shows, for one scene at high SNR, the data-carrying map $\mathbf{Z}_{\rm rd} = \mathbf{Z}_{\rm rd}^{\rm prbs} + \mathbf{Z}_{\rm rd}^{\rm ran} + \mathbf{N}_{\rm rd}^{\rm prbs}$, the raw data-free MLS map $\mathbf{Z}_{\rm rd} = \mathbf{Z}_{\rm rd}^{\rm prbs} + \mathbf{N}_{\rm rd}^{\rm prbs}$, and the RDPDNet output $\mathbf{\hat{Z}}_{\rm rd}$. Data embedding raises the sidelobe floor $\mathbf{Z}_{\rm rd}^{\rm ran}$ so the targets are barely visible in either raw map, whereas RDPDNet suppresses the background by tens of dB and preserves all six target signatures at their true range--velocity coordinates, without knowledge of the embedded symbols.
\begin{figure}[htp!]
\centering
\color{black}
    \includegraphics[width=\linewidth]{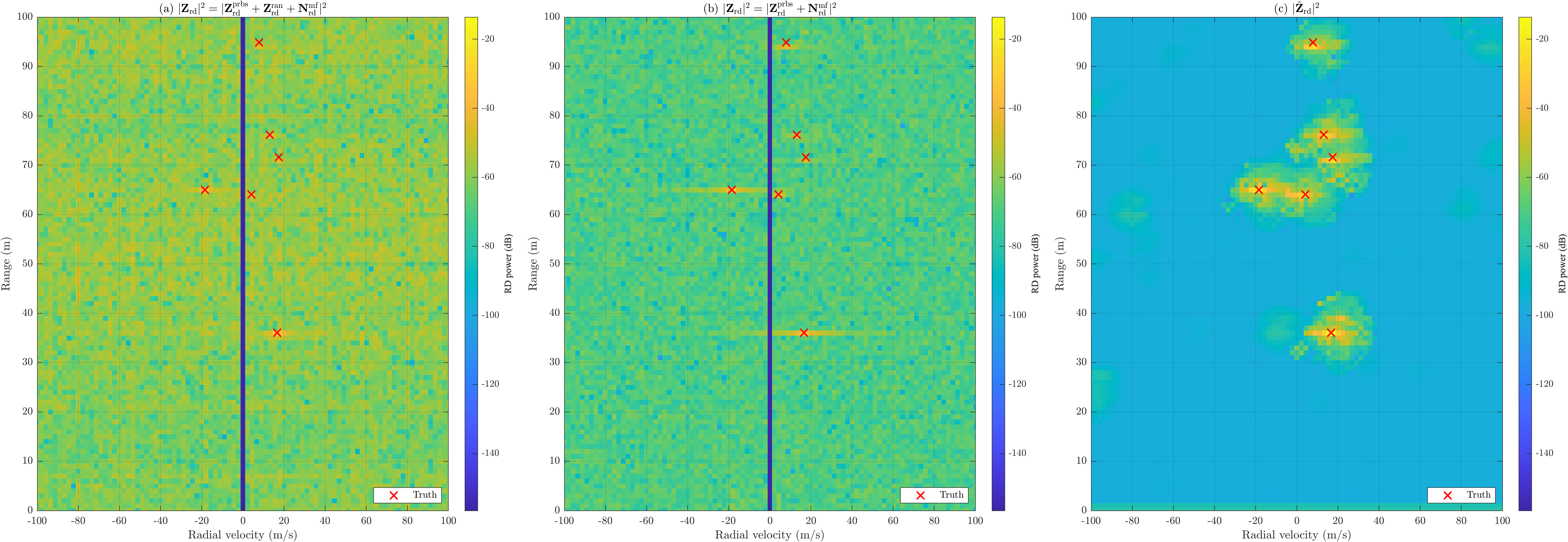}
\caption{RD heatmaps of one scene at $\mathrm{SNR}=\SI{30}{dB}$: (a) data-carrying waveform, (b) raw data-free MLS, (c) RDPDNet-denoised. Crosses mark true targets.}
\label{fig_rd_maps}
\end{figure}
Fig.~\ref{fig_afm_masks} visualizes the AFM masks for the same scene: $\mathbf{M}_{\rm mask}$ floods the low-energy non-target regions, $\mathbf{M}_{\rm tars}$ concentrates on the true targets and the zero-Doppler clutter column, and their combination $\mathbf{M}_{\rm afm} = \mathbf{M}_{\rm mask} \odot (1-\mathbf{M}_{\rm tars})$ perturbs only clutter- and noise-dominated cells, keeping RDPDNet consistent around the target peaks.
\begin{figure}[htp!]
	\centering
	\color{black}
    \includegraphics[width=\linewidth]{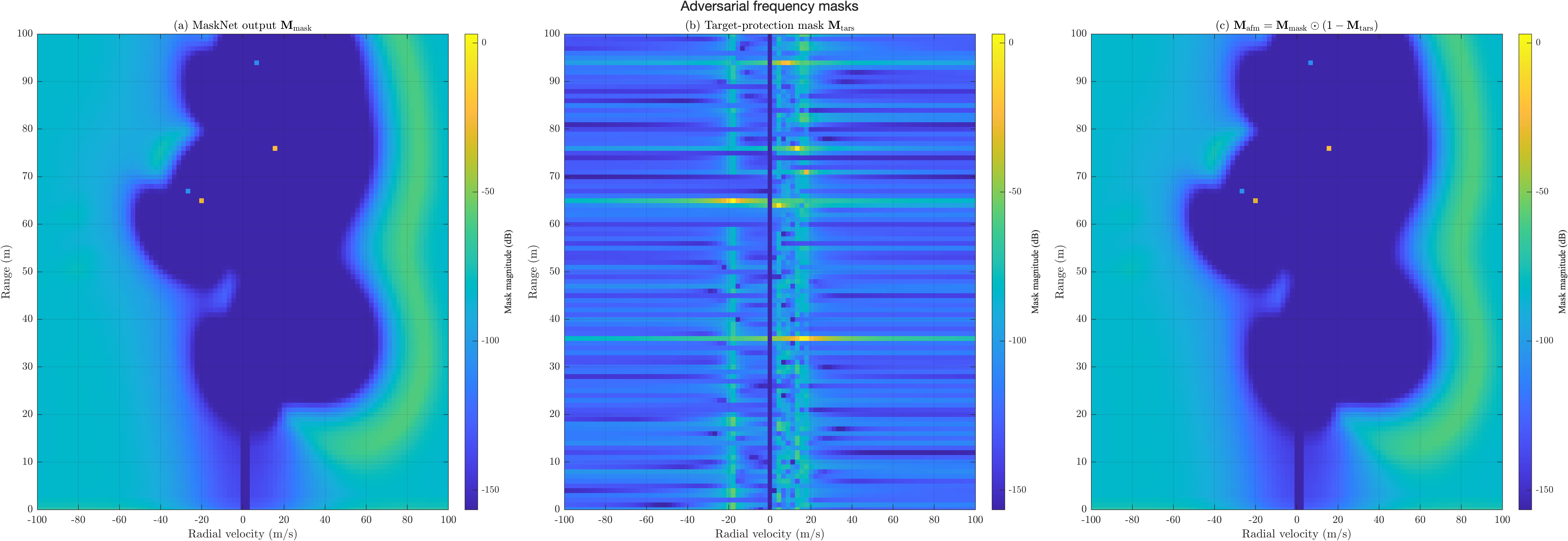}
	\caption{AFM masks for the scene of Fig.~\ref{fig_rd_maps}: (a) MaskNet output $\mathbf{M}_{\rm mask}$, (b) target-protection mask $\mathbf{M}_{\rm tars}$, (c) adversarial mask $\mathbf{M}_{\rm afm}$.}
	\label{fig_afm_masks}
\end{figure}
Fig.~\ref{fig_rmse_N_chip_vs_N_prbs} isolates the waveform-parameter trade-off using the conventional CFAR read-out. For fixed $T_{\rm prbs}$, increasing $N_{\rm chip}$ refines the range resolution $\Delta r = c T_{\rm chip}/2$: from $N_{\rm chip}=64$ to $512$ the high-SNR range RMSE drops by over an order of magnitude (about \SI{30}{m} to \SI{2}{m}), Fig.~\ref{fig_rmse_range_vs_snr_nchip_sweep}. Dually, increasing $N_{\rm prbs}$ refines the velocity resolution $\Delta \nu = \lambda/(2 N_{\rm prbs} T_{\rm prbs})$: the high-SNR velocity RMSE improves from about \SI{13}{m/s} ($N_{\rm prbs}=64$) to below \SI{2}{m/s} ($N_{\rm prbs}=512$), saturating at its resolution floor beyond $\mathrm{SNR}=\SI{15}{dB}$, Fig.~\ref{fig_rmse_vel_vs_snr_nprbs_sweep}. Since $N_{\rm prbs}$ also sets the symbol count $N_{\rm sym} = N_{\rm prbs} N_{\rm bit}^{\rm prbs}$, these sweeps quantify the sensing side of the sensing--communication trade-off through the explicit design parameters $(N_{\rm chip}, N_{\rm prbs})$, a flexibility that fixed-parameter PMCW designs do not offer.
\begin{figure}[htp!]
	\centering
	\color{black}
    \subfloat[]{\includegraphics[width=0.45\linewidth]{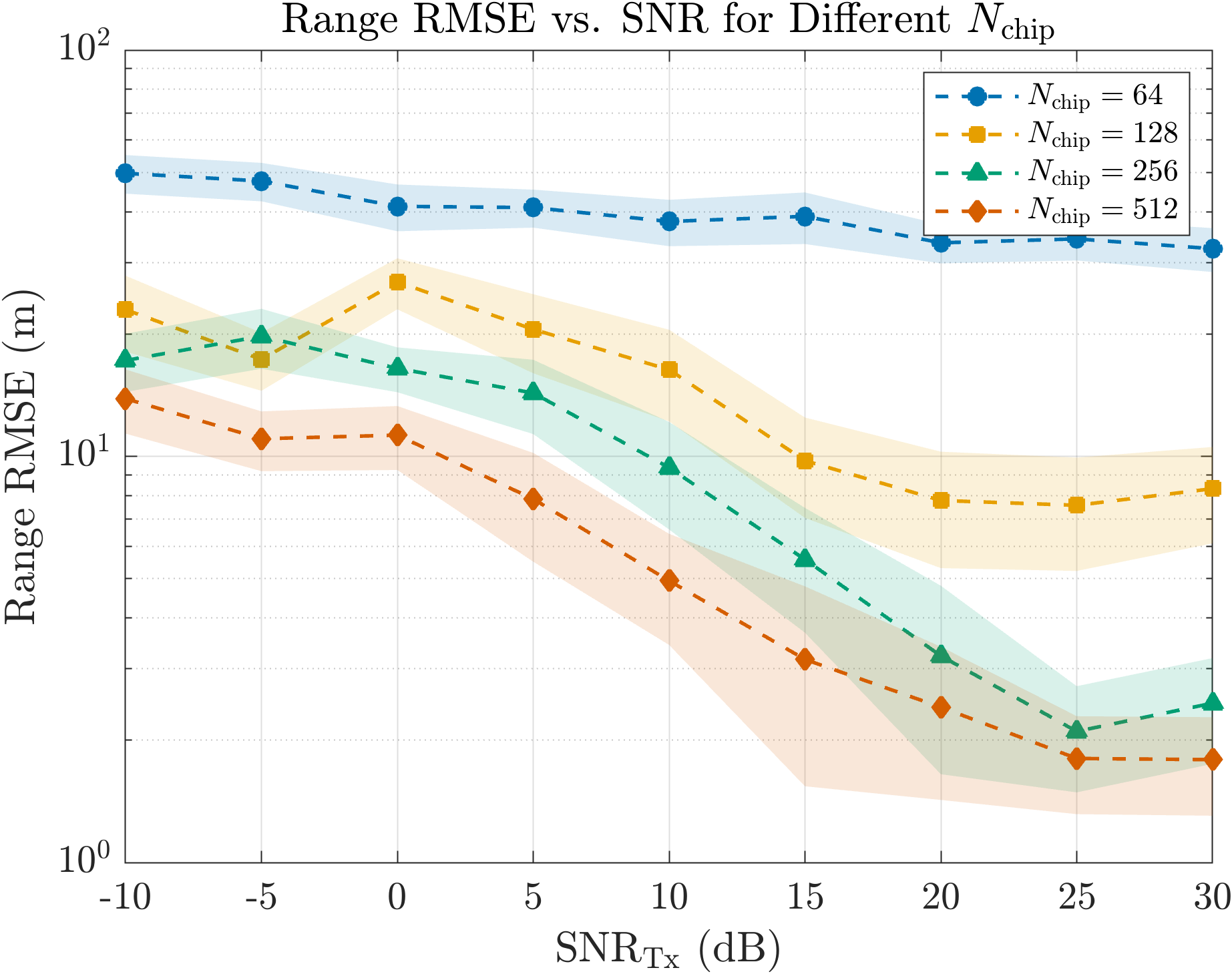}\label{fig_rmse_range_vs_snr_nchip_sweep}}\hfill
    \subfloat[]{\includegraphics[width=0.45\linewidth]{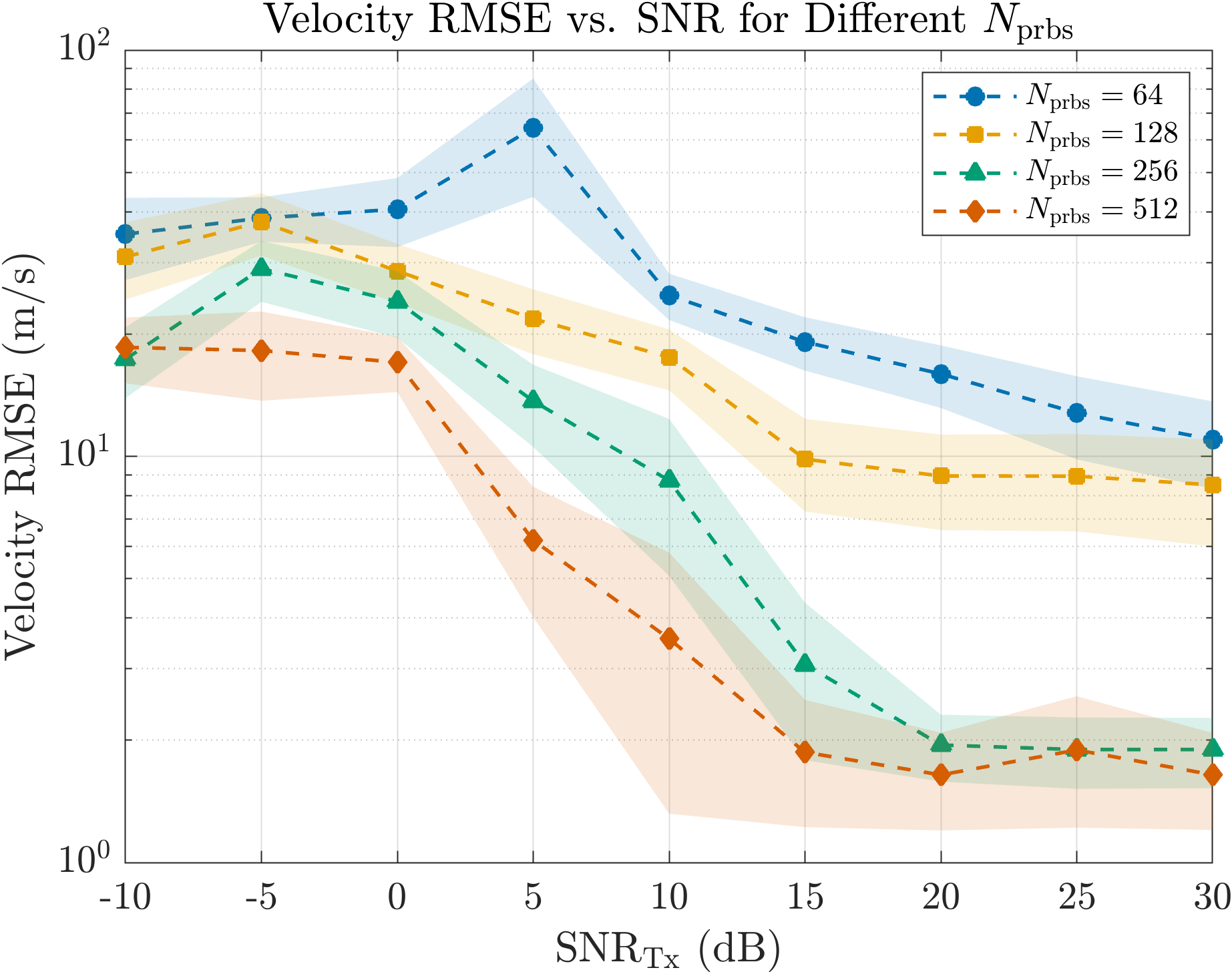}\label{fig_rmse_vel_vs_snr_nprbs_sweep}}\hfill
	\caption{(a) Range RMSE vs.\ SNR for increasing $N_{\rm chip}$ (finer $\Delta r$; $N_{\rm prbs}=256$). (b) Velocity RMSE vs.\ SNR for increasing $N_{\rm prbs}$ (finer $\Delta \nu$; $N_{\rm chip}=512$).}
	\label{fig_rmse_N_chip_vs_N_prbs}
\end{figure}

Fig.~\ref{fig_rmse} evaluates the estimator chains against the closed-form CRLB \eqref{eq_crlb_r}--\eqref{eq_crlb_v}. Fig.~\ref{fig_bias_significance_vs_snr} shows the empirical bias of the conventional data-free pipeline: large and statistically significant below $\mathrm{SNR}=\SI{0}{dB}$ (the detector triggers on noise) and essentially zero beyond $\mathrm{SNR}=\SI{15}{dB}$. This bias is dominated by the CA-CFAR/sinc grid quantization, shared by every chain, so at high SNR all chains inherit the same bias and their RMSE curves converge to a common floor. A single measurement on the data-free pipeline therefore characterizes $b_{\rm bias}$ for all chains and is injected into the benchmark $\sqrt{\mathrm{CRLB} + b_{\rm bias}^2}$ of Remark~\ref{remark_biased_crlb}; at low SNR the biases differ, so this curve is a high-SNR reference, not a per-chain bound.

Figs.~\ref{fig_rmse_range_vs_snr}--\ref{fig_rmse_vel_vs_snr} compare the four chains (RDPDNet-denoised and conventional, each with and without embedded data) sharing the read-out \eqref{eq_proposed_estimator_chain}. Three points follow: (i) embedding data raises the RMSE through the data-dependent sidelobes, a penalty peaking at mid SNR and closing at high SNR, most of which RDPDNet absorbs; (ii) RDPDNet helps most at low SNR, markedly lowering both range and velocity RMSE, with all curves converging to a common grid-quantization floor at high SNR; and (iii) against the bias-adjusted benchmark $\sqrt{\mathrm{CRLB}_{\rm unbiased}+b_{\rm bias}^2}$ of Remark~\ref{remark_biased_crlb}, the near-unbiased conventional chain attains the benchmark at high SNR while the biased RDPDNet chains track it closely, with only a small residual offset from the learned bias. The detection probability stays essentially one, so the conditional RMSE is not inflated by selection effects, and the closed-form CRLB correctly predicts the high-SNR behavior.
\begin{figure*}[!t]
	\centering
	\color{black}
    \subfloat[]{\includegraphics[width=0.48\linewidth]{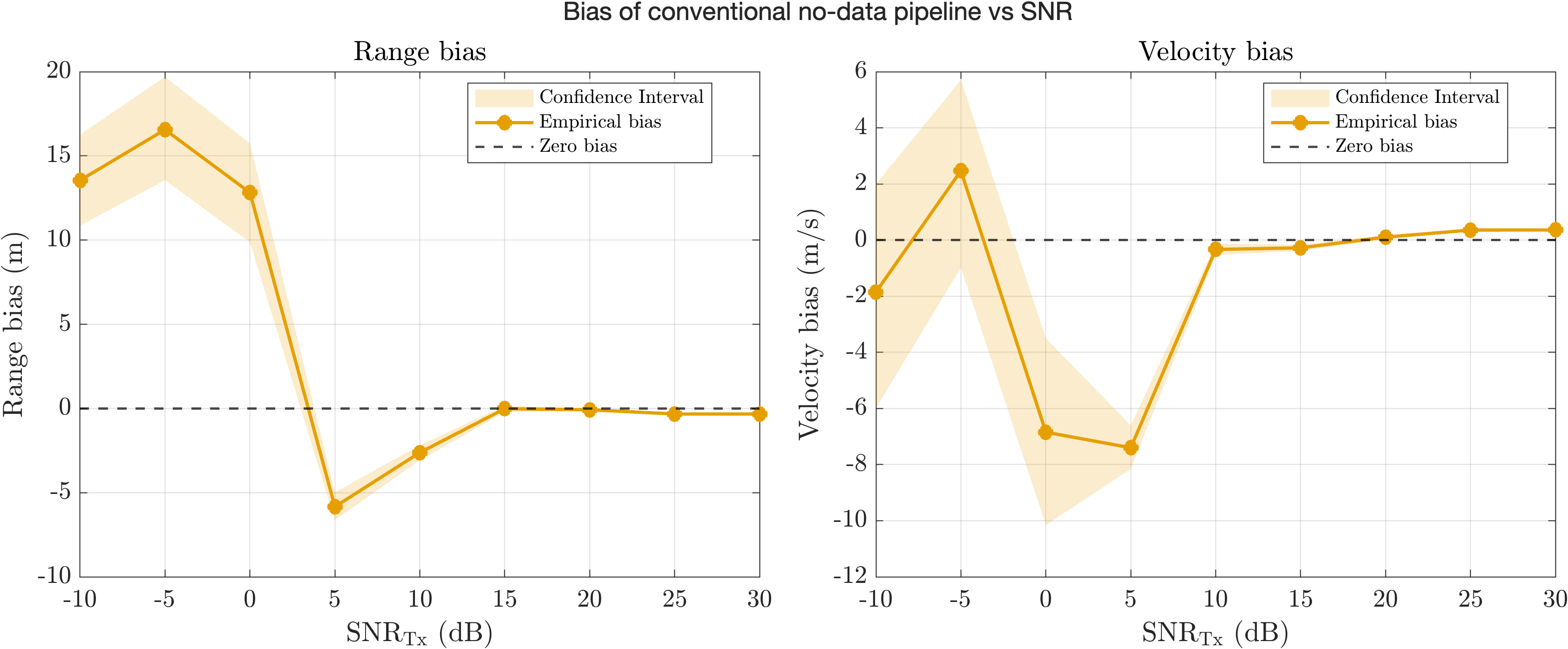}\label{fig_bias_significance_vs_snr}}\hfill
    \subfloat[]{\includegraphics[width=0.24\linewidth]{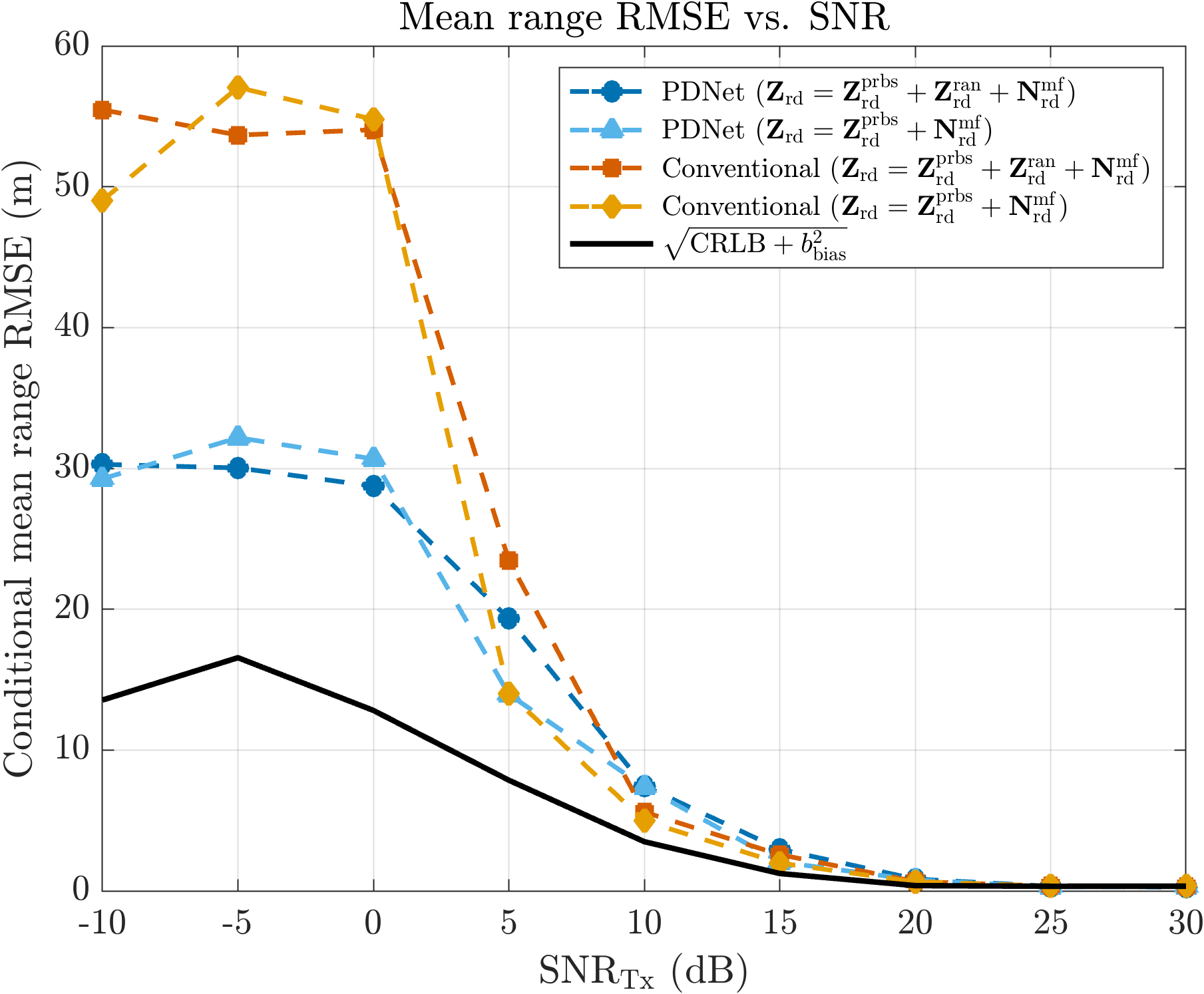}\label{fig_rmse_range_vs_snr}}\hfill
    \subfloat[]{\includegraphics[width=0.24\linewidth]{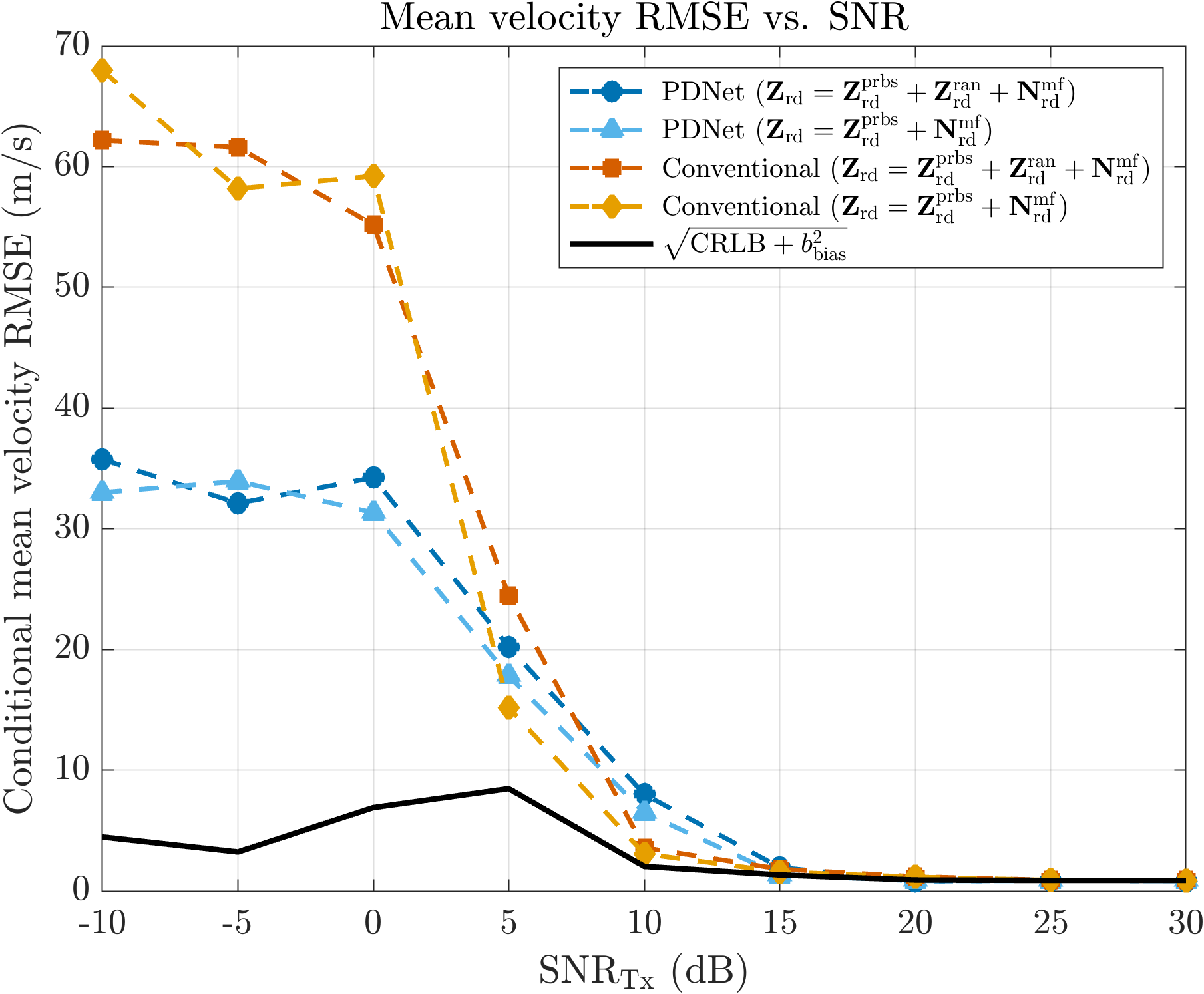}\label{fig_rmse_vel_vs_snr}}\hfill
	\caption{(a) Empirical bias of the conventional pipeline (significant below \SI{0}{dB}, vanishing beyond \SI{15}{dB}). (b) Conditional range and (c) velocity RMSE vs.\ SNR for the RDPDNet-denoised and conventional chains, with/without embedded data, against the benchmark $\sqrt{\mathrm{CRLB}+b_{\rm bias}^2}$.}
	\label{fig_rmse}
\end{figure*}

\label{anchor_comment_1_3}
\label{anchor_comment_4_8}
\label{anchor_comment_E_9}
\label{anchor_comment_E_8b}
Finally, we quantify the computational cost of the added denoising stage to confirm its practical feasibility. Since RDPDNet does not replace any receiver block but is inserted between the RD-map formation and the CA-CFAR detector, the per-frame cost of the proposed chain is the conventional cost plus a single denoiser pass,
\begin{align} \label{eq_complexity_total}
    \underbrace{\mathcal{O} \big(N_{\rm chip} N_{\rm prbs} \log(N_{\rm chip} N_{\rm prbs})\big)}_{\text{MF} + \text{Doppler-FFT} + \text{CA-CFAR}} + \underbrace{\mathcal{O} \big(L_{\rm cnn} \bar{C}^2 k^2 N_{\rm chip} N_{\rm prbs}\big)}_{\text{RDPDNet}},
\end{align}
where $L_{\rm cnn}$ is the number of convolutional layers, $k$ the kernel size, and $\bar{C}$ the average channel width. The denoiser term is linear in the map size $N_{\rm chip} N_{\rm prbs}$ and, unlike the MF stage, is independent of the number of targets and scatterers $N_{\rm tars} + N_{\rm scats}$. At inference, RDPDNet further drops the encoder component and evaluates only $\mathbf{\Phi}_{\rm rd} = \{\mathbf{\Phi}_{\rm ext}, \mathbf{\Phi}_{\rm dec}\}$ in \eqref{eq_infer_denoising_process}, so the added latency of RDPDNet is a single fixed-size forward pass that does not scale with SNR or target count. Table~\ref{tab_complexity} reports the model size, per-frame FLOPs, and measured inference latency of the conventional MF~$+$~CA-CFAR chain (no learnable parameters), the RDPDNet stage alone, and the complete pipeline, on an RD map of size $N_{\rm chip} \times N_{\rm prbs} = 512 \times 256$. The complete pipeline runs in $24.608$ ms per frame, i.e., the conventional $14.880$ ms plus the $9.728$ ms denoiser pass, so the enhancement is obtained at a bounded, additive cost that keeps the receiver real-time capable.
\begin{table*}[!t]
\centering
\color{black}
\caption{Per-frame complexity, model size, and inference latency for a $2\times512\times256$ input. RDPDNet is a preprocessing stage inserted into the conventional chain, so the third row is the sum of the first two; hardware and batch size are in \cite{github_repo}.}
\label{tab_complexity}
\label{anchor_comment_2_m15}
\setlength{\tabcolsep}{6pt}
\renewcommand{\arraystretch}{1.2}
\color{black}
\begin{tabular}{lcccc}
\hline
Method & Order complexity & Params & FLOPs/frame & Latency \\
\hline
Conventional chain: MF $+$ CA-CFAR & $\mathcal{O}(N_{\rm chip} N_{\rm prbs} \log(N_{\rm chip} N_{\rm prbs}))$ & $0$ & $0.028$ G & $14.880$ \si{ms} \\
Inserted stage: \textbf{RDPDNet (ours)} & $\mathcal{O}(L_{\rm cnn} \bar{C}^2 k^2 N_{\rm chip} N_{\rm prbs})$ & $162288$ & $18.280$ G & $9.728$ \si{ms} \\
Proposed pipeline: MF $+$ \textbf{RDPDNet} $+$ CA-CFAR & $\mathcal{O}(L_{\rm cnn} \bar{C}^2 k^2 N_{\rm chip} N_{\rm prbs})$ & $162288$ & $18.308$ G & $24.608$ \si{ms} \\
\hline
\end{tabular}
\end{table*}

\section{Communication Decoding with the PDISAC Waveform} \label{sec_com}

\subsection{Delay Estimation at the UE}
Consider $\mathbf Y_{\rm com}\in\mathbb C^{N_{\rm chip}\times(N_{\rm prbs}+1)}$ from \eqref{eq_Y_com}. The additional received MLS preserves the delayed tail and permits each transmitted MLS to be reconstructed from two consecutive received MLS sequences after synchronization. Delay estimation itself is performed on the current received MLS using the known reference sequence.
With the MLS known at the UE, delay estimation is cast as a ACF between the received MLS $\vec y_{{\rm com},i}$ and the conjugated modulated MLS $\vec p_{{\rm prbs},i}\in\{-1+0j,1+0j\}^{N_{\rm chip}\times1}$, $\vec{\text{ACF}}_{\rm com}(t_i,\tau_{\rm ue})=\vec y_{{\rm com},i}\circledast\vec p_{{\rm prbs},i}$, which can be expanded to
\begin{align}\label{eq_acf_com_delay_b}
\vec{\text{ACF}}_{\rm com}(t_i,\tau_{\rm ue})
=(\vec d_{{\rm isac},i}^{\rm los}\circledast\vec p_{{\rm prbs},i}^{*})h_{{\rm com},i}^{\rm los}+\vec n_{{\rm com},i}^{\rm mf},
\end{align}
where $\vec n_{{\rm com},i}^{\rm mf}$ is the noise vector after the MF.
Since $\vec d_{{\rm isac},i}^{\rm los}$ carries communication data at the even (data) slots, the ACF fluctuates and the delay peak becomes unreliable if computed on the full sequence. The odd (pilot) and even (data) contributions to \eqref{eq_acf_com_delay_b} are therefore evaluated separately and combined by magnitude, giving the delay index and delay
\begin{align}
&\hat k_{\tau,i} \\
& =\arg\max_{k}\Big(\big|[\vec{\text{ACF}}_{\rm com}^{\rm odd}(t_i,\tau_{\rm ue})]_k\big|+\big|[\vec{\text{ACF}}_{\rm com}^{\rm even}(t_i,\tau_{\rm ue})]_k\big|\Big), \nonumber
\end{align}
where $\vec{\text{ACF}}_{\rm com}^{\rm odd}(t_i,\tau_{\rm ue})$ and $\vec{\text{ACF}}_{\rm com}^{\rm even}(t_i,\tau_{\rm ue})$ denote the odd (pilot) and even (data) slot components of \eqref{eq_acf_com_delay_b}, obtained by masking the reference $\vec p_{{\rm prbs},i}$ to its odd and even slots, respectively, before the circular correlation. The odd term is coherent because it contains no unknown data symbol, whereas the magnitude operation prevents sign changes in the even data slots from canceling its delay evidence. The delay estimate is $\hat\tau_{{\rm ue},i}=(\hat k_{\tau,i}-1)T_{\rm chip}$.
The received MLS sequences are concatenated pairwise and shifted by $\hat k_{\tau,i}$ to obtain $\hat{\mathbf Y}_{\rm com}\in\mathbb C^{N_{\rm chip}\times N_{\rm prbs}}$. After alignment to the LOS delay, its NLOS replicas retain the relative delays $\tau_{\rm nlos}^{(m)}-\tau_{\rm ue}$
\begin{align}\label{eq_Y_shifted_com_b}
\hspace{-0.2cm}\hat{\mathbf Y}_{\rm com}=\mathbf D_{\rm isac}\mathbf H_{\rm com}^{\rm los}
\!+\!\sum_{m=1}^{N_{\rm tars}^{\rm ref}}\mathbf D_{\rm isac}^{{\rm nlos},(m)}
\mathbf H_{\rm com}^{{\rm nlos},(m)}+\mathbf N_{\rm com}.
\end{align}

\subsection{Channel Estimation at Pilot Slots}
\label{anchor_comment_2_m1}
\label{anchor_comment_E_10c}
\label{anchor_comment_3_6c}
\label{anchor_comment_4_9c}
After synchronization, each MLS is divided into alternating pilot and data slots of length $N_{\rm chip}^{\rm slot}$. For slot pair $k$ in MLS $i$, let $\vec p_{{\rm prbs},k}^{\rm odd}$ and $\vec p_{{\rm prbs},k}^{\rm even}$ denote the known MLS chip vectors. The pilot observation is
\begin{align}\label{eq_y_com_odd_slot_b}
&\vec y_{{\rm com},i}^{\rm odd,(k)} \nonumber \\
&=\vec p_{{\rm prbs},k}^{\rm odd}h_{{\rm com},i}^{\rm los}
+\sum_m\vec p_{{\rm prbs},k}^{{\rm odd},(m)}h_{{\rm com},i}^{{\rm nlos},(m)}
+\vec n_{{\rm com},k}^{\rm odd}.
\end{align}
The implementation estimates a single scalar channel per slot via least squares (LS)
\begin{align}\label{eq_est_h_com_ik_b}
\hat h_{{\rm com},i}^{(k)}
=\frac{(\vec p_{{\rm prbs},k}^{\rm odd})^{\mathsf{H}}\vec y_{{\rm com},i}^{\rm odd,(k)}}
{(\vec p_{{\rm prbs},k}^{\rm odd})^{\mathsf{H}}\vec p_{{\rm prbs},k}^{\rm odd}}.
\end{align}
This LS estimate is unbiased whenever the NLOS interference is zero-mean and uncorrelated with $\vec p_{{\rm prbs},k}^{\rm odd}$; under a flat, interference-free channel it reduces to $\hat h_{{\rm com},i}^{(k)}=h_{{\rm com},i}^{\rm los}$.
The even-slot observation is more generally written using its delayed data-bearing NLOS replicas
\begin{align}\label{eq_y_com_even_slot_b}
&\vec y_{{\rm com},i}^{\rm even,(k)} \nonumber \\
&=s_{k,i}\vec p_{{\rm prbs},k}^{\rm even}h_{{\rm com},i}^{\rm los}
+\sum_m\vec d_{k,i}^{{\rm even},(m)}h_{{\rm com},i}^{{\rm nlos},(m)}
+\vec n_{{\rm com},k}^{\rm even}.
\end{align}
The delayed vector $\vec d_{k,i}^{{\rm even},(m)}$ need not contain only the current symbol when an NLOS delay crosses a slot or MLS boundary.

\subsection{Data Detection via Channel Equalization}
The data-slot MF statistic is $z_{{\rm com},i}^{(k)}=(\vec p_{{\rm prbs},k}^{\rm even})^{\mathsf{H}}\vec y_{{\rm com},i}^{\rm even,(k)}$. Consistent with the scalar estimator \eqref{eq_est_h_com_ik_b}, the equalized symbol is $\hat s_{k,i}= z_{{\rm com},i}^{(k)}/\big(\hat h_{{\rm com},i}^{(k)} (\vec p_{{\rm prbs},k}^{\rm even})^{\mathsf{H}}\vec p_{{\rm prbs},k}^{\rm even}\big)$. Consistent with the bit-to-symbol mapping $s_n=2b_n-1$, the hard decision is $\Re\{\hat s_{k,i}\}>0\Rightarrow\hat b_{k,i}=1$ and $\Re\{\hat s_{k,i}\}<0\Rightarrow\hat b_{k,i}=0$. Across $N_{\rm prbs}$ MLS sequences, the total number of decoded symbols is $N_{\rm sym}=N_{\rm prbs}N_{\rm bit}^{\rm prbs}$.

\subsection{Bit Error Rate Analysis}
\label{anchor_comment_4_5}
\label{anchor_comment_E_1c}
For a synchronized data slot, let $L=N_{\rm chip}^{\rm slot}$ and $c_m=(\vec p_{{\rm prbs},k}^{\rm even})^{\mathsf{H}}\vec p_{{\rm prbs},k}^{{\rm even},(m)}$. Conditioned on a geometry-determined channel realization, the instantaneous signal-to-interference-plus-noise ratio (SINR) is
\begin{align}\label{eq_sinr_k_i_b}
\gamma_{k,i}=\frac{|h_{{\rm com},i}^{\rm los}|^2L^2}
{\sum_{m=1}^{N_{\rm tars}^{\rm ref}}|h_{{\rm com},i}^{{\rm nlos},(m)}|^2|c_m|^2+\sigma_{\rm com}^2L}.
\end{align}
For a deterministic realization, the exact coherent interference is $|\sum_mh_{{\rm com},i}^{{\rm nlos},(m)}c_m|^2$; \eqref{eq_sinr_k_i_b} neglects its cross terms under the approximation of weakly correlated, approximately uniform geometry-induced phases in Section~\ref{sec_fading}.

Rather than the conservative perfect-overlap convention $|c_m|^2=L^2$, the per-path despread interference $P_{\rm nlos}^{(m)}=|c_m|^2|h_{{\rm com},i}^{{\rm nlos},(m)}|^2$ is modeled as i.i.d.\ across the $N_{\rm tars}^{\rm ref}$ reflected paths with mean $\Omega=\mathbb E[P_{\rm nlos}^{(m)}]$, so the aggregate interference $P_{\rm nlos}=\sum_m P_{\rm nlos}^{(m)}$ follows a Gamma distribution,
\begin{align}
f_{P_{\rm nlos}}(x)=\frac{x^{N_{\rm tars}^{\rm ref}-1}e^{-x/\Omega}}{\Omega^{N_{\rm tars}^{\rm ref}}(N_{\rm tars}^{\rm ref}-1)!},\quad x\ge0.
\end{align}
Defining $P_{\rm los}=|h_{{\rm com},i}^{\rm los}|^2$, the conditional coherent binary phase-shift keying (BPSK) error probability is
\begin{align}\label{eq_ins_p_e_b}
p_e=Q \left(\sqrt{2\gamma_{k,i}}\right)
=Q \left(\sqrt{\frac{2P_{\rm los}L^2}{P_{\rm nlos}+\sigma_{\rm com}^2L}}\right).
\end{align}
$P_{\rm los}$ is deterministic given the scene and time step, while $P_{\rm nlos}\sim\mathrm{Gamma}(N_{\rm tars}^{\rm ref},\Omega)$ is random over the reflector.

\begin{proposition}[Analytical average BER]\label{proposition_ber_b}
Let $p_e^{\min}=Q\big(\sqrt{2P_{\rm los}L/\sigma_{\rm com}^2}\big)$ be the interference-free error floor. The expected average BER over $p_e\in[p_e^{\min},1/2)$ is
\begin{align} \label{eq_avg_ber_b}
\bar P_e
= \int_{p_e^{\min}}^{1/2}
&p_e 
\frac{\big[\Xi(p_e)\big]^{N_{\rm tars}^{\rm ref}-1}
     \exp \big(-\dfrac{\Xi(p_e)}{\Omega}\big)}
     {\Omega^{N_{\rm tars}^{\rm ref}}
      \bigl(N_{\rm tars}^{\rm ref}-1\bigr)!}
\nonumber \\
& \times \frac{4\sqrt{2\pi} P_{\rm los}L^2}
     {\bigl[Q^{-1}(p_e)\bigr]^3}
\exp \big(\frac{\bigl[Q^{-1}(p_e)\bigr]^2}{2}\big)
 dp_e,
\end{align}
where $\Xi(p_e) = 2P_{\rm los}L^2/\bigl[Q^{-1}(p_e)\bigr]^2 - \sigma_{\rm com}^2 L$.
\end{proposition}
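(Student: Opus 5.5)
The plan is a change of variables. We treat the conditional error probability $p_e$ in \eqref{eq_ins_p_e_b} as a random variable induced by $P_{\rm nlos}\sim\mathrm{Gamma}(N_{\rm tars}^{\rm ref},\Omega)$ with $P_{\rm los}$ fixed. We then compute $\bar P_e=\mathbb E[p_e]=\int p_e\, f_{p_e}(p_e)\,dp_e$ by transforming the Gamma density of $P_{\rm nlos}$ into a density on $p_e$.

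\textbf{Step 1: the map and its range.} Write $g(x)=Q\big(\sqrt{2P_{\rm los}L^2/(x+\sigma_{\rm com}^2L)}\big)$ for $x\ge 0$. Since the argument of $Q$ decreases in $x$ and $Q$ is decreasing, $g$ is strictly increasing. Hence $P_{\rm nlos}\mapsto p_e$ is a bijection from $[0,\infty)$ onto $[g(0),\lim_{x\to\infty}g(x))=[p_e^{\min},1/2)$. This gives the integration limits, provided $p_e^{\min}$ is read with $g(0)=Q(\sqrt{2P_{\rm los}L/\sigma_{\rm com}^2})$, which matches the stated floor.

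\textbf{Step 2: the inverse map.} Let $u=Q^{-1}(p_e)>0$ on this range. Solving $u^2=2P_{\rm los}L^2/(x+\sigma_{\rm com}^2L)$ gives $x=\Xi(p_e)=2P_{\rm los}L^2/u^2-\sigma_{\rm com}^2L$, which is exactly the $\Xi$ in the statement.

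\textbf{Step 3: the Jacobian.} Use $dQ^{-1}(p)/dp=-1/Q'(u)=\sqrt{2\pi}\,e^{u^2/2}$. Then
\[
\frac{d\Xi}{dp_e}=-\frac{4P_{\rm los}L^2}{u^3}\cdot\frac{du}{dp_e}\cdot(-1)\quad\Rightarrow\quad \Big|\frac{d\Xi}{dp_e}\Big|=\frac{4\sqrt{2\pi}P_{\rm los}L^2}{u^3}e^{u^2/2}.
\]
The sign is positive, consistent with the monotone increase of $g$.

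\textbf{Step 4: assemble.} Write $f_{p_e}(p)=f_{P_{\rm nlos}}(\Xi(p))\,|d\Xi/dp|$. Inserting the Gamma PDF and multiplying by $p_e$ gives \eqref{eq_avg_ber_b}.

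The main delicate step is the Jacobian: the sign and the factor $\sqrt{2\pi}e^{u^2/2}$ from the derivative of $Q^{-1}$. A secondary check is that $u>0$, so $u^3>0$, on $[p_e^{\min},1/2)$; this follows from $p_e<1/2$. The endpoint $p_e\to 1/2$ gives $u\to 0$ and an apparent singularity. It is integrable because the Gamma tail $e^{-\Xi/\Omega}$, with $\Xi\sim 1/u^2$, decays faster than $u^{-3}e^{u^2/2}$ grows, so the integral is finite. Equivalently, this follows directly from $\mathbb E[p_e]\le 1/2$.
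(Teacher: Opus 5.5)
Your proposal is correct and follows the paper's proof essentially step for step: you invert \eqref{eq_ins_p_e_b} to get $\Xi(p_e)$, compute the Jacobian through $u=Q^{-1}(p_e)$, transform the Gamma density of $P_{\rm nlos}$ into a density on $p_e$, and integrate $p_e f_{p_e}(p_e)$ over $[p_e^{\min},1/2)$, with monotonicity and endpoint checks slightly more careful than the paper's. One small slip in Step 3 is that $dQ^{-1}(p)/dp = 1/Q'(u) = -\sqrt{2\pi}\,e^{u^2/2}$, not $+\sqrt{2\pi}\,e^{u^2/2}$, so the chain rule gives $d\Xi/dp_e=\bigl(-4P_{\rm los}L^2/u^3\bigr)\bigl(-\sqrt{2\pi}\,e^{u^2/2}\bigr)$ directly; the unexplained factor $(-1)$ you inserted only compensates for that sign error, and the absolute value, and hence the result, is unaffected.
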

\begin{IEEEproof}
The detailed change-of-variables derivation is presented in Appendix~\ref{appx_proof_BER}.
\end{IEEEproof}

\subsection{Numerical Results for Communication Performance}

Fig.~\ref{fig_com_channel_dis} shows the empirical distributions of the geometry-determined channel: the LOS component follows the bimodal arcsine density of a constant-amplitude path with uniform phase (modes at $\pm|h_{\rm com}^{\rm los}|$), while the aggregate NLOS component matches a zero-mean $\mathcal{CN}(0,\Omega)$ fit, consistent with the central-limit argument of Section~\ref{sec_fading}. These descriptive fits validate the geometry-based characterization that underpins the semi-analytical BER and capacity expressions below, without imposing per-path fading on the simulation.

Fig.~\ref{fig_ber_nsym} plots the BER against the number of data bits per sequence $N_{\rm bit}^{\rm prbs}$. A larger $N_{\rm bit}^{\rm prbs}$ shortens the per-symbol slot $L = N_{\rm chip}/(2 N_{\rm bit}^{\rm prbs})$ and hence the despreading gain \eqref{eq_sinr_k_i_b}, so the BER rises monotonically, from about $0.23$ at $N_{\rm bit}^{\rm prbs}=4$ to $0.39$ at $N_{\rm bit}^{\rm prbs}=128$ under perfect $(\tau_{\rm ue}, h_{{\rm com},i})$; here the ordinate is the BER averaged over the SNR sweep (the abscissa is the allocation, not SNR), so it is dominated by low-SNR points and is not the BER at any single SNR. The semi-analytical curves of Proposition~\ref{proposition_ber_b} coincide with the numerical BER, and the partially informed receivers ($\hat{\tau}_{\rm ue}$ and/or $\hat{h}_{{\rm com},i}$) add a small penalty that shrinks as interference dominates.

Fig.~\ref{fig_ber_snr} shows the corresponding BER versus SNR. The analytical predictions of Proposition~\ref{proposition_ber_b} match the MC results at every allocation and SNR, validating the geometry-conditioned BER characterization without any parametric fading assumption.
\begin{figure*}[!t]
\centering
\color{black}
    \subfloat[]{\includegraphics[width=0.235\linewidth]{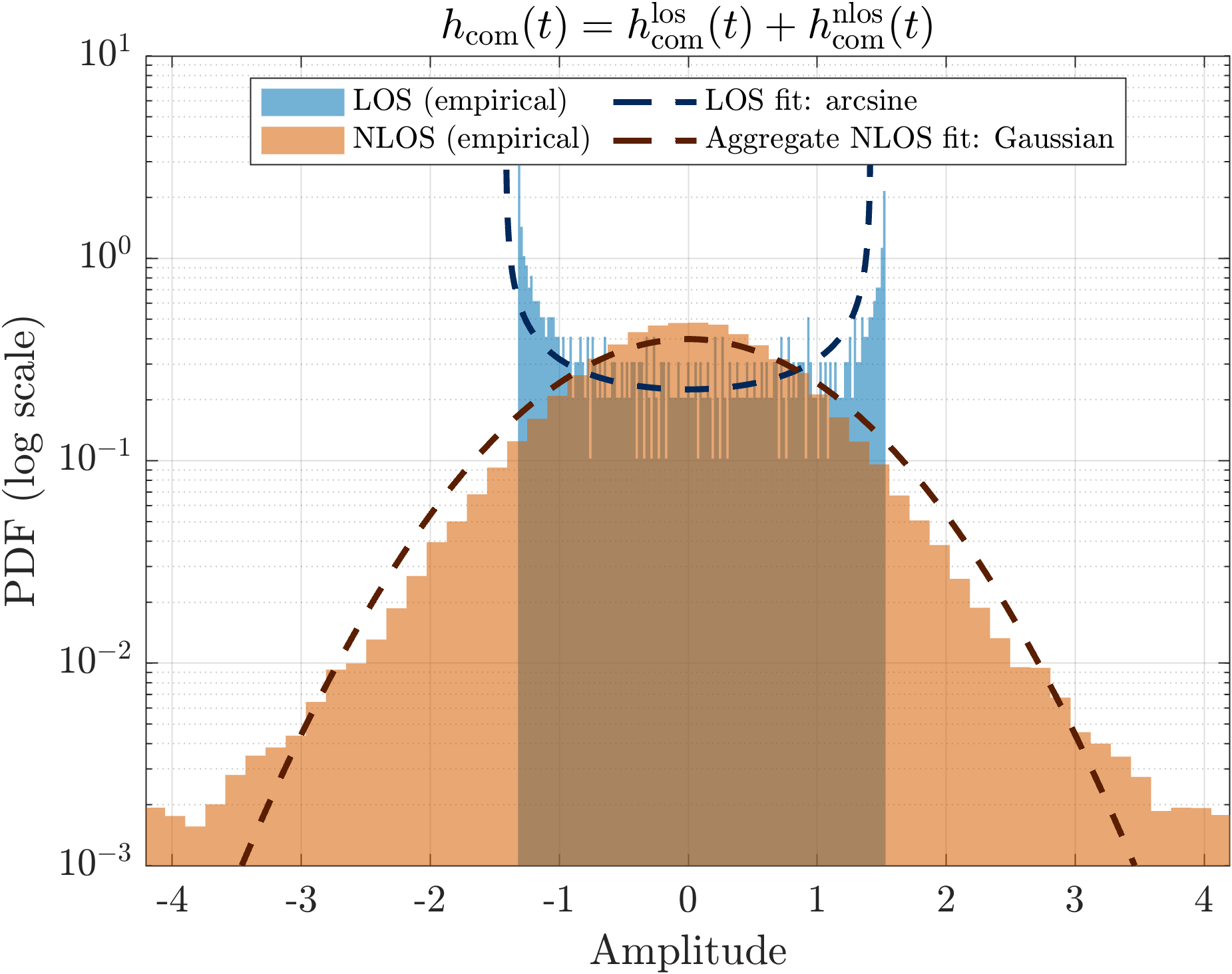}\label{fig_com_channel_dis}}\hfill
    \subfloat[]{\includegraphics[width=0.235\linewidth]{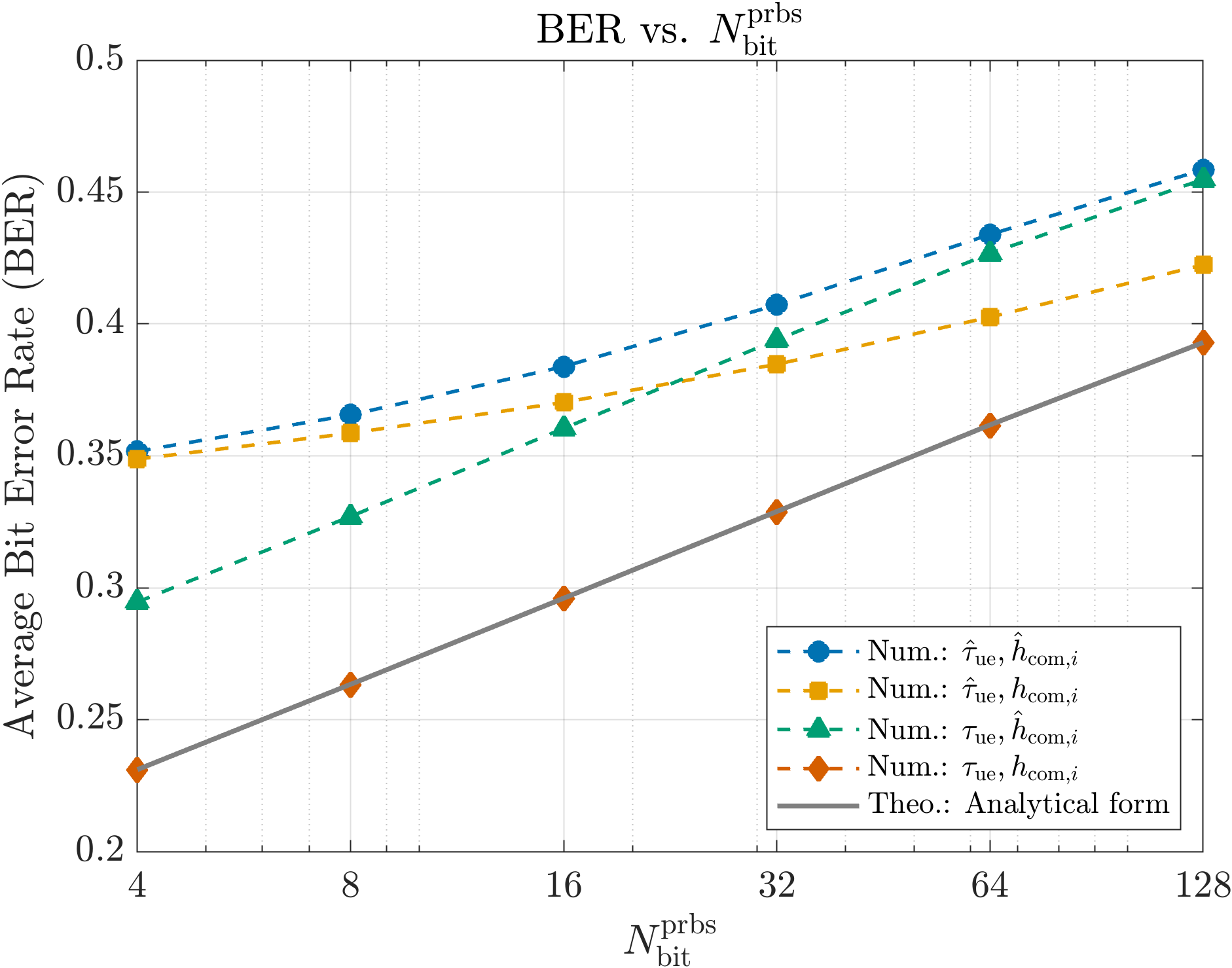}\label{fig_ber_nsym}}\hfill
    \subfloat[]{\includegraphics[width=0.235\linewidth]{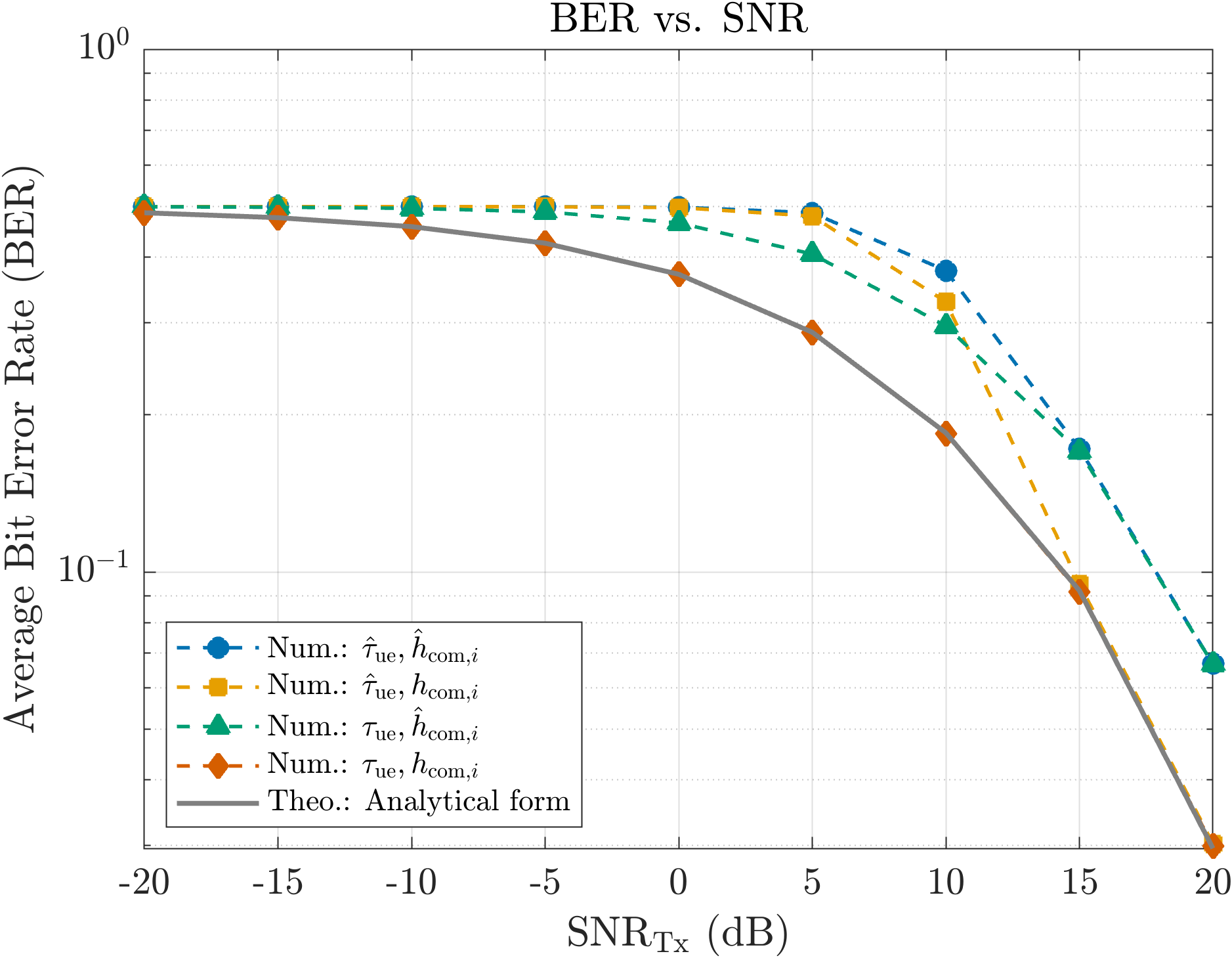}\label{fig_ber_snr}}\hfill
    \subfloat[]{\includegraphics[width=0.24\linewidth]{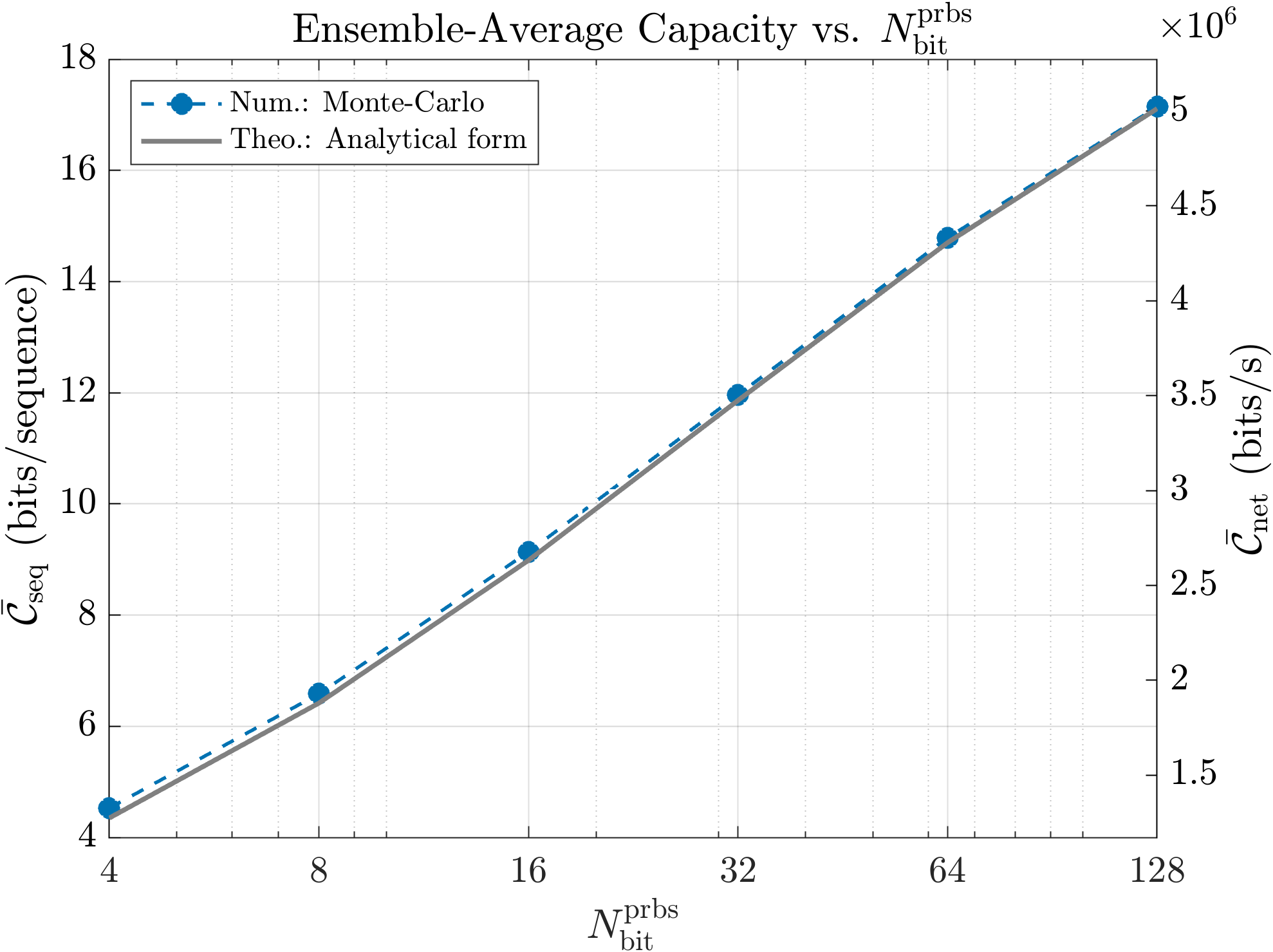}\label{fig_cap_nsym}}
\caption{(a) Empirical LOS and aggregate NLOS channel distributions with descriptive fits (arcsine and $\mathcal{CN}(0,\Omega)$). (b) Raw uncoded BER (SNR-averaged) vs.\ bits per sequence for the four channel state information (CSI)/synchronization conditions. (c) BER vs.\ SNR for different allocations under perfect $(\tau_{\rm ue}, h_{{\rm com},i})$, with the predictions of Proposition~\ref{proposition_ber_b}. (d) Average Shannon capacity per sequence $\bar{\mathcal{C}}_{\rm seq}$ (left axis) and net capacity $\bar{\mathcal{C}}_{\rm sec}$ (right axis) vs.\ $N_{\rm bit}^{\rm prbs}$ \eqref{eq_cap_seq_avg}--\eqref{eq_cap_net}, closed-form (solid) vs.\ MC (markers); pilot-discounted Shannon limits, distinct from the uncoded BER in (b)--(c).}
\label{fig_communication_comparation}
\end{figure*}

\subsubsection*{Average Capacity Analysis}
\label{anchor_comment_2_6}
\label{anchor_comment_2_m13}
\label{anchor_comment_3_3}
\label{anchor_comment_4_4}
\label{anchor_comment_E_6}
We consider the capacity via the Shannon rate $\log_2(1+\gamma_{k,i})$ of the data-bearing slots, which is characterized separately by Figs.~\ref{fig_ber_nsym}--\ref{fig_ber_snr}. Beyond that BER-based throughput, the fundamental limit of the data-carrying slots follows from the Shannon capacity applied to the slot-partitioned waveform of \eqref{eq_d_isac_i} and the detection SINR of \eqref{eq_sinr_k_i_b}. Each MLS sequence of duration $T_{\rm prbs} = N_{\rm chip} T_{\rm chip}$ is split into $N_{\rm slot} = 2N_{\rm bit}^{\rm prbs}$ alternating pilot and data slots: the $N_{\rm bit}^{\rm prbs}$ pilot (sounding) slots are the overhead paid for delay synchronization, channel estimation, and sensing, and together occupy the duration $T_{\rm pilot} = N_{\rm bit}^{\rm prbs} N_{\rm chip}^{\rm slot} T_{\rm chip} = T_{\rm prbs}/2$, so only the fraction $\frac{T_{\rm prbs} - T_{\rm pilot}}{T_{\rm prbs}} = \frac{1}{2}$ of the sequence transports data. After despreading, each data slot of $L = N_{\rm chip}/(2N_{\rm bit}^{\rm prbs})$ chips collapses into one channel use with SINR $\gamma_{k,i}$ given by \eqref{eq_sinr_k_i_b}, so the $i$-th sequence supports at $\mathcal{C}_{\rm seq}^{(i)} = \sum_{k=1}^{N_{\rm bit}^{\rm prbs}} \log_2 \big(1+\gamma_{k,i}\big) \quad \text{[bits/sequence]}$.
Since the sum runs only over the $N_{\rm bit}^{\rm prbs}$ data slots, the pilot overhead is already discounted structurally, and the average overhead-discounted capacity per sequence is obtained by averaging $\mathcal{C}_{\rm seq}^{(i)}$ over the $N_{\rm prbs}$ sequences of the frame,
\begin{align} \label{eq_cap_seq_avg}
	\bar{\mathcal{C}}_{\rm seq}
	= \frac{1}{N_{\rm prbs}} \sum_{i=1}^{N_{\rm prbs}} \sum_{k=1}^{N_{\rm bit}^{\rm prbs}}
	\log_2 \big(1+\gamma_{k,i}\big)
	\quad \text{[bits/seq]}.
\end{align}
The corresponding average net capacity, normalized per sequence duration, $\bar{\mathcal{C}}_{\rm sec}
= \frac{\bar{\mathcal{C}}_{\rm seq}}{T_{\rm prbs}}$, is expanded to
\begin{align} \label{eq_cap_net}
	\hspace{-0.3em}\bar{\mathcal{C}}_{\rm sec}
	=
	\frac{1}{N_{\rm prbs} T_{\rm prbs}}
	\sum_{i=1}^{N_{\rm prbs}} \sum_{k=1}^{N_{\rm bit}^{\rm prbs}}
	\log_2 \big(1+\gamma_{k,i}\big)
	\ \text{[bits/sec]}.
\end{align}
With chip-rate bandwidth $B \approx 1/T_{\rm chip}$, the corresponding net spectral efficiency is $\bar{\mathcal{T}}_{\rm net} = \bar{\mathcal{C}}_{\rm sec}/B = 1/(N_{\rm prbs} N_{\rm chip}) \sum_{i=1}^{N_{\rm prbs}} \sum_{k=1}^{N_{\rm bit}^{\rm prbs}} \log_2(1+\gamma_{k,i})$~[bits/s/Hz].

Fig.~\ref{fig_cap_nsym} evaluates \eqref{eq_cap_seq_avg}--\eqref{eq_cap_net}: both $\bar{\mathcal{C}}_{\rm seq}$ and $\bar{\mathcal{C}}_{\rm sec}$ increase monotonically with $N_{\rm bit}^{\rm prbs}$, as the added parallel channel uses outweigh the lower per-slot SINR from the reduced spreading gain $L$, and the analytical curves (solid) coincide with the MC markers over the whole sweep. Together with the BER growth of Fig.~\ref{fig_ber_nsym}, this quantifies the throughput--reliability trade-off of the PDISAC waveform: dense allocations raise the rate at the cost of per-bit reliability, converting part of the radar-centric sounding overhead into a data rate that scales with $N_{\rm bit}^{\rm prbs}$, while Fig.~\ref{fig_rmse} confirms that RDPDNet contains the resulting sensing cost.

\section{Conclusions} \label{sec_conclusion}
\label{anchor_comment_2_4}
\label{anchor_comment_E_2c}
\label{anchor_comment_E_10d}
\label{anchor_comment_2_2c}
\label{anchor_comment_3_6d}
\label{anchor_comment_4_9d}

In this paper, we proposed PDISAC, an end-to-end framework for monostatic PMCW ISAC in stochastic, cluttered, and mobile environments, built on two coupled contributions: a multi-bit slot-partitioned ISAC waveform that embeds $N_{\rm bit}^{\rm prbs} \ge 1$ bits per sounding sequence through symbol-level spreading while preserving the deterministic radar code on every chip, and RDPDNet, a probabilistic denoising network trained with the AFM loss that suppresses the resulting  induced sidelobe artifacts without knowledge of the embedded symbols. We characterized the ensemble statistics of the geometry-determined channel and derived the closed-form CRLB for sensing together with the semi-analytical BER and average capacity for communication. We showed that RDPDNet absorbs most of the data-embedding sensing penalty and provides consistent low-SNR gains, that the estimator chains stay consistent with the bias-adjusted CRLB attained by the conventional data-free chain at high SNR, and that increasing the slot allocation raises the achievable rate at the expense of a higher BER, exposing a tunable sensing--communication trade-off; the denoising stage adds only a bounded, additive per-frame cost and preserves real-time capability.

\appendices

\section{Proof of Proposition 1}
\label{appx_proof_crlb}

From Section~\ref {sub_sec_crlb}, the FIM function can be rewritten as
\begin{align}
  \mathbf{F}(\vec{\Psi}) = \frac{2}{\sigma_{\rm sen}^{2}} & \textstyle \sum_{i=0}^{N_{\rm fr}-1} \Re\Bigg[
      \frac{\partial \sum_{m=1}^{N_{\rm tars}} \alpha_{\rm sen}^{(m)} r_{\mathrm{sen},i}^{(m)}(\vec{\psi}^{(m)})}{\partial \vec{\Psi}} \nonumber \\
      & \textstyle \times \left(\frac{\partial \sum_{m=1}^{N_{\rm tars}} \alpha_{\rm sen}^{(m)} r_{\mathrm{sen},i}^{(m)}(\vec{\psi}^{(m)})}{\partial \vec{\Psi}}\right)^{\mathsf{H}}
    \Bigg],
\end{align}
where the Hermitian product is required for real parameters in a complex Gaussian mean model, i.e., $F_{ab}=\frac{2}{\sigma_{\rm sen}^{2}}\sum_i\Re\{(\partial\mu_i/\partial\theta_a)^*(\partial\mu_i/\partial\theta_b)\}$, so that the complex gain contributes through $|\alpha_{\rm sen}^{(m)}|^2$.
Under the assumption that targets are orthogonal in the delay-Doppler
domain, the cross-target terms vanish, $\forall  m \neq k,$
\begin{align}
\hspace{-0.7em}\textstyle \sum_{i=0}^{N_{\rm fr}-1}\Re \left[\frac{\partial\alpha_{\rm sen}^{(m)}  r_{\mathrm{sen},i}^{(m)}(\vec{\psi}^{(m)})}{\partial\vec{\psi}^{(m)}}\left(\frac{\partial\alpha_{\rm sen}^{(k)}  r_{\mathrm{sen},i}^{(k)}(\vec{\psi}^{(k)})}{\partial\vec{\psi}^{(k)}}\right)^{\mathsf{H}}\right] \approx \vec{0},
\end{align}
and the joint FIM reduces to the block-diagonal form
$
  \mathbf{F}(\vec{\Psi})
  = \mathrm{diag} \left(
      \mathbf{F}(\vec{\psi}^{(1)}),
      \mathbf{F}(\vec{\psi}^{(2)}),
      \ldots, \mathbf{F}(\vec{\psi}^{(N_{\rm tars})})
    \right),
$
where each block is presented as
\begin{align}
  \mathbf{F} \left(\vec{\psi}^{(m)}\right)
  = \textstyle \frac{2|\alpha_{\rm sen}^{(m)}|^2}{\sigma_{\rm sen}^{2}}
    &\sum_{i=0}^{N_{\rm fr}-1}
    \Re \Bigg[
      \frac{\partial r_{\mathrm{sen},i}^{(m)}(\vec{\psi}^{(m)})}{\partial\vec{\psi}^{(m)}} \nonumber \\
    & \textstyle \times \left( 
        \frac{\partial r_{\mathrm{sen},i}^{(m)}(\vec{\psi}^{(m)})}{\partial\vec{\psi}^{(m)}}
       \right)^{ H}
    \Bigg].
\end{align}
Then, we derivatives with respect to range $r_m$ and velocity $v_m$ of the target $m$, with
$
  \frac{\partial r_{\mathrm{sen},i}^{(m)}}{\partial r_m}
  = \left(-\frac{2}{r_m} - j\frac{4\pi f_c}{c}\right)
    r_{\mathrm{sen},i}^{(m)} \label{eq_dr} 
  \quad - \frac{2}{c} \sqrt{\mathrm{PL}_{\mathrm{sen}}^{(m)}}
    e^{-j2\pi f_c \tau_{\mathrm{sen}}^{(m)}}
    e^{j2\pi f_{\mathrm{D},\mathrm{sen}}^{(m)} i T_{\rm chip}}
    d_i'(\tau_{\mathrm{sen}}^{(m)})
 $, and 
 $ 
  \frac{\partial r_{\mathrm{sen},i}^{(m)}}{\partial v_m}
  = \left(-j\frac{4\pi f_c i T_{\mathrm{chip}}}{c}\right)
    r_{\mathrm{sen},i}^{(m)}. \label{eq_dv}
$
, respectively.
From \eqref{eq_d_isac_t}, we have $d_i(t) \in \{-1+0j, 1+0j\}$ piecewise constant within each chip, so we set $d_i'(\tau_{\mathrm{sen}}^{(m)}) \approx 0$. This approximation neglects the bandwidth-dependent delay information carried by the chip transitions of the pulse-shaped waveform; together with the assumptions of a known complex gain $\alpha_{\rm sen}^{(m)}$ and orthogonal targets, the resulting bound is therefore conditional on these assumptions and optimistic for the delay. Substituting above steps into the target FIM, the three independent entries are simplified to
\begin{align}
  F_{rr}^{(m)}
  & =  \textstyle \frac{2\bigl|\alpha_{\mathrm{sen}}^{(m)}\bigr|^{2}
           N_{\rm fr} \mathrm{PL}_{\mathrm{sen}}^{(m)}}
         {\sigma_{\rm sen}^{2}}
    \left(\frac{4}{r_m^2}+\frac{16\pi^2}{\lambda^2}\right),
    \label{eq_Frr}\\[6pt]
 F_{rv}^{(m)}
  & =  \textstyle \frac{16\pi^2
           \bigl|\alpha_{\mathrm{sen}}^{(m)}\bigr|^{2}
           \mathrm{PL}_{\mathrm{sen}}^{(m)} T_{\mathrm{chip}} N_{\rm fr}(N_{\rm fr}-1)}
         {\sigma_{\rm sen}^{2}\lambda^2},
    \label{eq_Frv}\\[6pt]
  F_{vv}^{(m)}
  &= \textstyle \frac{16\pi^2
           \bigl|\alpha_{\mathrm{sen}}^{(m)}\bigr|^{2}
           \mathrm{PL}_{\mathrm{sen}}^{(m)}T_{\mathrm{chip}}^{2} N_{\rm fr}(N_{\rm fr} - 1)(2N_{\rm fr} - 1)}
         {3 \sigma_{\rm sen}^{2}\lambda^2}.
    \label{eq_Fvv}
\end{align}
The CRLB for the target $m$ follows directly from $\mathbf{F}(\vec{\psi}^{(m)})$, and it can be presented as
\begin{align}
  \hspace{-0.5em} \left[\mathbf{F} \left(\vec{\psi}^{(m)}\right)\right]^{-1}
   =  \frac{1}{\det \left(\mathbf{F}(\vec{\psi}^{(m)})\right)}
    \begin{bmatrix}
       F_{vv}^{(m)} & -F_{rv}^{(m)} \\[3pt]
      -F_{vr}^{(m)} &  F_{rr}^{(m)}
    \end{bmatrix}.
\end{align}
This completes the proof.

\section{Proof of Proposition~\ref{proposition_ber_b}}
\label{appx_proof_BER}
From the instantaneous error probability~\eqref{eq_ins_p_e_b} and the univariate change-of-variables theorem, the transformed probability density function $f_{p_e}(p_e)$ satisfies
\begin{align} \label{eq_cov_general}
    f_{p_e}(p_e)
    = f_{P_{\mathrm{nlos}}} \bigl(\Xi(p_e)\bigr)
      \times \big|\frac{dP_{\mathrm{nlos}}}{dp_e}\big|,
\end{align}
where $f_{P_{\mathrm{nlos}}}(\cdot)$ is the Gamma distribution evaluated for $N_{\mathrm{tars}}^{\mathrm{ref}}$ i.i.d.\ reflectors, and $\Xi(p_e)$ is the inverse mapping that expresses $P_{\mathrm{nlos}}$ as a function of $p_e$. We isolate the random aggregate interference $P_{\mathrm{nlos}}$ from the forward relationship~\eqref{eq_ins_p_e_b}. Applying the inverse $Q$-function and squaring both sides gives
$
    \bigl[Q^{-1}(p_e)\bigr]^2
    = \frac{2 P_{\mathrm{los}} L^2}
           {P_{\mathrm{nlos}} + \sigma_{\mathrm{com}}^2 L}.
$
Cross-multiplying and subtracting the thermal noise term yields the inverse mapping
$
    P_{\mathrm{nlos}}
    = \frac{2 P_{\mathrm{los}} L^2}
           {\bigl[Q^{-1}(p_e)\bigr]^2}
    - \sigma_{\mathrm{com}}^2 L
    \;\triangleq\; \Xi(p_e).
$
Note that $\Xi(p_e)\ge 0$ is guaranteed for all $p_e\in[p_e^{\min},1/2)$ by the choice of $p_e^{\min}$ in Proposition~\ref{proposition_ber_b}, since at $p_e = p_e^{\min}$ the interference term vanishes ($P_{\mathrm{nlos}}=0$) and $\Xi(p_e)$ increases monotonically toward infinity as $p_e \to 1/2$, so the domain covers the full support $P_{\mathrm{nlos}}\in[0,\infty)$ of the Gamma-distributed interference.

Consider the Jacobian computation, we compute $|dP_{\mathrm{nlos}}/dp_e|$ via the chain rule, introducing the intermediate variable $w = Q^{-1}(p_e)$ as
$
    \frac{dP_{\mathrm{nlos}}}{dp_e}
    = \frac{dP_{\mathrm{nlos}}}{dw}  \frac{dw}{dp_e}.
$
Differentiating $\Xi$ with respect to $w$,
$
    \frac{dP_{\mathrm{nlos}}}{dw}
    = -\frac{4 P_{\mathrm{los}} L^2}{w^3}.
$
Differentiating $p_e = Q(w)$ using Leibniz's integral rule,
$
    \frac{dp_e}{dw}
    = -\frac{1}{\sqrt{2\pi}} e^{-w^2/2}
    \rightarrow
    \frac{dw}{dp_e}
    = -\sqrt{2\pi} \exp \big(\frac{w^2}{2}\big).
$
Substituting $w = Q^{-1}(p_e)$ and multiplying the two factors gives the absolute Jacobian
$
    \big|\frac{dP_{\mathrm{nlos}}}{dp_e}\big|
    = \frac{4\sqrt{2\pi} P_{\mathrm{los}} L^2}
           {\bigl[Q^{-1}(p_e)\bigr]^3}
      \exp \big(\frac{\bigl[Q^{-1}(p_e)\bigr]^2}{2}\big).
$

Consider the Gamma density $f_{P_{\mathrm{nlos}}}(x) = \frac{x^{N_{\mathrm{tars}}^{\mathrm{ref}}-1} e^{-x/\Omega}}{\Omega^{N_{\mathrm{tars}}^{\mathrm{ref}}} (N_{\mathrm{tars}}^{\mathrm{ref}}-1)!}$ evaluated at the mapped coordinate $x = \Xi(p_e)$; combining with the absolute Jacobian via~\eqref{eq_cov_general} yields the closed-form instantaneous PDF
\begin{align} \label{eq_pdf_Pe}
    &f_{p_e}(p_e)
    = f_{P_{\mathrm{nlos}}} \bigl(\Xi(p_e)\bigr)
        \big|\frac{dP_{\mathrm{nlos}}}{dp_e}\big| \\[4pt]
    &= \textstyle \frac{\bigl[\Xi(p_e)\bigr]^{N_{\mathrm{tars}}^{\mathrm{ref}}-1}
             \exp \big(-\dfrac{\Xi(p_e)}{\Omega}\big)}
            {\Omega^{N_{\mathrm{tars}}^{\mathrm{ref}}}
             \bigl(N_{\mathrm{tars}}^{\mathrm{ref}}-1\bigr)!}
      \frac{4\sqrt{2\pi} P_{\mathrm{los}} L^2}
           {\bigl[Q^{-1}(p_e)\bigr]^3}
      \exp \big(\frac{\bigl[Q^{-1}(p_e)\bigr]^2}{2}\big),  \nonumber
\end{align}
where $\Xi(p_e) = \frac{2 P_{\mathrm{los}} L^2}{\bigl[Q^{-1}(p_e)\bigr]^2} - \sigma_{\mathrm{com}}^2 L$. Applying $\bar{P}_e = \int_{p_e^{\min}}^{1/2} p_e f_{p_e}(p_e) dp_e$ directly yields the expected average BER formulation in \eqref{eq_avg_ber_b}. This completes the proof.

\color{black}
\bibliographystyle{IEEEtran}

\bibliography{References}
\end{document}